\newcommand{\be}{\begin{eqnarray} \begin{aligned}}
\newcommand{\ee}{\end{aligned} \end{eqnarray} }
\newcommand{\benn}{\begin{eqnarray*} \begin{aligned}}
\newcommand{\eenn}{\end{aligned} \end{eqnarray*} }
\newcommand{\bc}{\begin{center}}
\newcommand{\ec}{\end{center}}
\newcommand{\id}{\mathbb{I}}
\newcommand{\tr}{\mathop{\mathrm{tr}}\nolimits}
\newcommand{\Tr}{\mathop{\mathrm{Tr}}\nolimits}
\newtheorem{theorem}{Theorem}[section]
\newtheorem{lemma}[theorem]{Lemma}
\newtheorem{definition}[theorem]{Definition}
\newtheorem{corollary}[theorem]{Corollary}
\newcommand{\cM}{\ensuremath{\mathcal{M}}}
\newcommand{\cY}{\ensuremath{\mathcal{Y}}}
\newcommand{\cP}{\ensuremath{\mathcal{P}}}
\newcommand{\me}{\mathbf{f}}
\newcommand{\argmax}{\mathop{\mathrm{argmax}}\nolimits}
\def\Complex{\mathbb{C}}
\def\id{\mathbb{I}}
\def\01{\{0,1\}}
\newcommand{\eps}{\varepsilon}
\newcommand{\ket}[1]{|#1\rangle}
\newcommand{\bra}[1]{\langle#1|}
\newcommand{\proj}[1]{|#1\rangle\langle#1|}
\newcommand{\inp}[2]{\langle{#1}|{#2}\rangle} 
\newcommand{\ignore}[1]{}
\newenvironment{sdp}[2]{
\smallskip
\begin{center}
\begin{tabular}{ll}
#1 & #2\\
subject to
}
{
\end{tabular}
\end{center}
\smallskip
}
\newcommand{\enc}{\ket{\Psi_{y_0y_1}}}
\newcommand{\senc}{\proj{\Psi_{y_0y_1}}}
\newcommand{\rot}{X_d^{y_0} Z_d^{y_1}}
\newcommand{\xd}{X_d}
\newcommand{\zd}{Z_d}
\newcommand{\hmin}{{\ensuremath{{\rm H}}_{\infty}}}
\newcommand{\hmineps}{{\ensuremath{{\rm H}}_{\infty}^\eps}}
\begin{document}

\title{ {Does ignorance of the whole imply ignorance of the parts?}\\ {\normalsize --- Large violations of non-contextuality in quantum theory}}
\author{Thomas Vidick}
\affiliation{Computer Science division, UC Berkeley, USA}
\email{vidick@eecs.berkeley.edu}
\author{Stephanie Wehner}
\affiliation{Center for Quantum Technologies, National University of Singapore, 2 Science Drive 3, 117543 Singapore}
\email{wehner@nus.edu.sg}
\date{\today}

\begin{abstract}
	A central question in our understanding of the physical world is how our knowledge of the whole relates to our knowledge of the individual parts. One aspect of this question is the following: to what extent does ignorance about a whole preclude knowledge of at least one of its parts?
	Relying purely on classical intuition, one would certainly be inclined to conjecture that a strong ignorance of the whole cannot come without significant 
	ignorance of at least one of its parts. Indeed, we show that this reasoning holds in any  non-contextual hidden variable model (NC-HV).
	Curiously, however, such a conjecture is \emph{false} in quantum theory: we provide an explicit example where a large ignorance about the whole can coexist with an almost perfect knowledge of each of its parts. More specifically, we provide a simple information-theoretic inequality satisfied in any NC-HV, but which can be \emph{arbitrarily} violated by quantum 
	mechanics. 
	Our inequality has interesting implications for quantum cryptography. 
\end{abstract}
\maketitle

In this note we examine the following seemingly innocent question: does one's ignorance about the whole
necessarily imply ignorance about at least one of its parts?
Given just a moments thought, the initial reaction is generally to give a positive answer. 
Surely, if one cannot know the whole, then one should be able to point to an unknown part.
Classically, and more generally for any deterministic non-contextual hidden variable model, our intuition turns out to be correct: ignorance about the whole does indeed imply the existence of a specific part which is unknown, so that one can point to the source of one's ignorance. 
However, we will show that in a quantum world this intuition is flawed. 

\section*{THE PROBLEM}
Let us first explain our problem more formally. Consider two dits $y_0$ and $y_1 \in \{0,\ldots,d-1\}$, where the string $y = y_0y_1$ plays the role of the whole, and $y_0$, $y_1$ are the individual parts. Let $\rho_y$ denote an encoding of the string $y$ into a classical or quantum state. In quantum theory,
$\rho_y$ is simply a density operator, and in a NC-HV model it is a preparation $\cP_y$ described by a probability distribution over hidden variables $\lambda \in \Lambda$. Let $P_Y$ be a probability distribution over $\{0,\ldots,d-1\}^{2}$, and imagine that with probability $P_Y(y)$ we are given the state $\rho_y$. 
The optimum probability of guessing $y$ given its encoding $\rho_y$, which lies in a register $E$, can be written as 
\begin{align}\label{eq:guessingProb}
P_{\rm guess}(Y|E) = \max_{\{\cM\}} \sum_{y \in \{0,\ldots,d-1\}^{ 2}} P_{Y}(y)\, p(y|\cM,\cP_y)\ ,
\end{align}
where 
$p(y|\cM,\cP_y)$ is the probability of obtaining outcome $y$ when measuring the preparation $\cP_y$ with $\cM$,
and the maximization is taken over all $d^2$-outcome measurements allowed in the theory. 
In the case of quantum theory, for example, 
the maximization is taken over POVMs $\cM = \{M_y\}_y$ and $p(y|\cM,\cP_y) = \tr(M_y \rho_y)$.
The guessing probability is directly related to the conditional min-entropy 
$\hmin(Y|E)$ through the equation~\cite{krs:entropy}
\begin{align}
\hmin(Y|E) := - \log P_{\rm guess}(Y|E)\ .
\end{align}
This measure plays an important role in quantum cryptography and is the relevant measure of information
in the single shot setting corresponding to our everyday experience, as opposed to the asymptotic setting captured by the von Neumann entropy. 
A closely related variant is the smooth min-entropy $\hmineps(Y|E)$ which can be thought of as being like $\hmin(Y|E)$ except with some
small error probability $\eps$. 
The main question we are interested in can then be loosely phrased as:

\begin{quote}
How does $\hmin(Y=Y_0Y_1|E)$ (ignorance about the whole)
relate to $\hmin(Y_C|EC)$, for $C \in \{0,1\}$ (ignorance about the parts)?
\end{quote}

Here the introduction of the additional random variable $C$ is crucial, and it can be understood as a pointer to the part of $Y$ about which there is large ignorance (given a large ignorance of the whole string $Y$); see Figure~\ref{fig:game} for an illustration of this role. 
It is important to note that the choice of $C$ should be consistent with the encoding prior to its definition. That is, whereas $C$ 
may of course depend on $Y_0,Y_1$ and the encoding $E$, the reduced state on registers holding $Y_0,Y_1$ and $E$ after tracing out $C$ should remain the same.
In particular, this condition states that $C$ cannot be the result of a measurement causing disturbance to the encoding register; if we were allowed to destroy information
in the encoding we would effectively alter the original situation.

\section*{RESULTS}

{\bf An inequality valid in any NC-HV model.}

We first show that classically, or more generally in any non-contextual hidden variable model~\cite{endnote23},
ignorance about the whole really \emph{does} imply ignorance about a part.
More specifically, we show that for any random variable $Y=Y_0Y_1$ and side information $E$, there exists a random variable $C \in \{0,1\}$ such
that
\begin{align}\label{eq:classicalSplitting}
\hmin(Y_C|EC) \gtrsim\frac{\hmin(Y_0Y_1|E)}{2}\ .
\end{align}
This inequality can be understood as an information-theoretic analogue of Bell inequalities to the question of non-contextuality.
Classically, this inequality is known as the \emph{min-entropy splitting inequality}, and 
plays an important role in the proof of security of some (classical) cryptographic primitives~\cite{juerg:splitting, serge:new}.
The proof of~\eqref{eq:classicalSplitting} is a straightforward extension to the case of standard 
NC-HV models~\cite{klyachko:nc,andreas:nc} of a 
classical technique known as min-entropy splitting first introduced by Wullschleger~\cite{juerg:splitting}, and we defer details to the appendix. 

The fact that $C$ is a random variable, rather than being deterministically chosen, is important, and an example will help clarify its role. Consider $Y$ uniformly distributed over $\{0,\ldots,d-1\}^{2}$ and $E = Y_0$ with probability $1/2$, and $Y_1$ with probability $1/2$. In this case it is easy to see that \emph{both} $Y_0$ and $Y_1$ can be guessed from $E$ with average success probability $1/2+1/(2d)$, so that $\hmin(Y_0|E) = \hmin(Y_1|E) \approx 1$, which is much less than $\hmin(Y|E)\approx \log d$. However, define $C$ as $0$ if $E=Y_1$ and $1$ if $E=Y_0$. Then it is clear that $\hmin(Y_C|EC) = \log d$, as we are always asked to predict the variable about which we have no side information at all! In this case the random variable $C$ ``points to the unknown'' by being correlated with the side information $E$, but is entirely consistent with our knowledge about the world: by tracing out $C$ we recover the initial joint distribution on $(Y,E)$. This also highlights the important difference between the task we are considering and the well-studied random access codes~\cite{nayak:rac,nayak:original}, in which the requirement is to be able to predict one of $Y_0,Y_1$ (adversarially chosen) from their encoding; for this task it has been demonstrated that there is virtually no asymptotic difference between classical and quantum encodings (see below for a discussion). 

It is interesting to note that~\eqref{eq:classicalSplitting} still holds if we consider a somewhat ``helpful'' physical model 
in which in addition to the encoding one might learn a small number of ``leaked'' bits of information about $Y$. More specifically, if the NC-HV discloses $m$ extra bits of information 
then it follows from the chain rule for the min-entropy (see appendix) that
\begin{align}
\hmin(Y_C|EC) \gtrsim\frac{\hmin(Y_0Y_1|E)}{2} - m\ .
\end{align}

{\bf Violation in quantum theory.} Our main result shows that~\eqref{eq:classicalSplitting} is violated in the strongest possible sense by quantum theory. More specifically, we provide an explicit construction that demonstrates this violation:
Let $Y = Y_0Y_1$ be uniformly distributed over $\{0,\ldots,d-1\}^{2}$. Given $y=y_0y_1\in\{0,\ldots,d-1\}^2$, define its encoding $\rho_{y_0y_1}^E = \proj{\Psi_y}$ as

\begin{align}\label{eq:encoding}
\ket{\Psi_y} := X^{y_0}_d Z^{y_1}_d \ket{\Psi}\ ,
\end{align}
where $X_d$ and $Z_d$ are the generalized Pauli matrices and 
\begin{align}
	\ket{\Psi} := 
	\frac{1}{\sqrt{2\left(1+\frac{1}{\sqrt{d}}\right)}}
	(\ket{0} + F\ket{0})\ ,
\end{align}
with $F$ being the matrix of the Fourier transform over $\mathbb{Z}_d$. Since we are only interested in showing a quantum violation, 
we will for simplicity always assume that $d$ is prime~\cite{endnoteAdd}. The system $YE$ is then described by the ccq-state
\begin{align}
\rho_{Y_0Y_1E} = \frac{1}{d^2} \sum_{y_0,y_1} \proj{y_0} \otimes \proj{y_1} \otimes \rho_{y_0y_1}^E\ .
\end{align}
We first prove that $\hmin(Y|E) = \log d$ for our choice of encoding.
We then show the striking fact that, even though the encoding we defined gives very little information about the whole string $Y$, for any adversarially chosen random variable $C$ (possibly correlated with our encoding) one can guess $Y_C$ from its encoding $\rho_E$ with essentially 
constant probability.
More precisely, for any ccqc-state $\rho_{Y_0Y_1EC}$, with $C \in \{0,1\}$, that satisfies the consistency relation $\tr_C(\rho_{Y_0Y_1EC}) = \rho_{Y_0Y_1E}$, we have
\begin{align}
\hmin(Y_C|EC) \approx 1
\end{align}
for \emph{any} sufficiently large $d$. 
This shows that the inequality~\eqref{eq:classicalSplitting} can be violated arbitrarily (with $d$), giving a striking example of the malleability
of quantum information. 
What's more, it is not hard to show that this effect still holds even for $\hmineps$, for constant error $\eps$,
and a ``helpful'' physical model leaking $m \approx c \log d$ bits of information with $c < 1/2$. 
Hence, the violation of 
the inequality~\eqref{eq:classicalSplitting} has the appealing feature of being very robust. Indeed, for any number of bits $m$ a NC-HV might leak in
addition, we could find a $d$ to ensure a violation.

\begin{center}
\begin{figure}
	\includegraphics[scale=0.5]{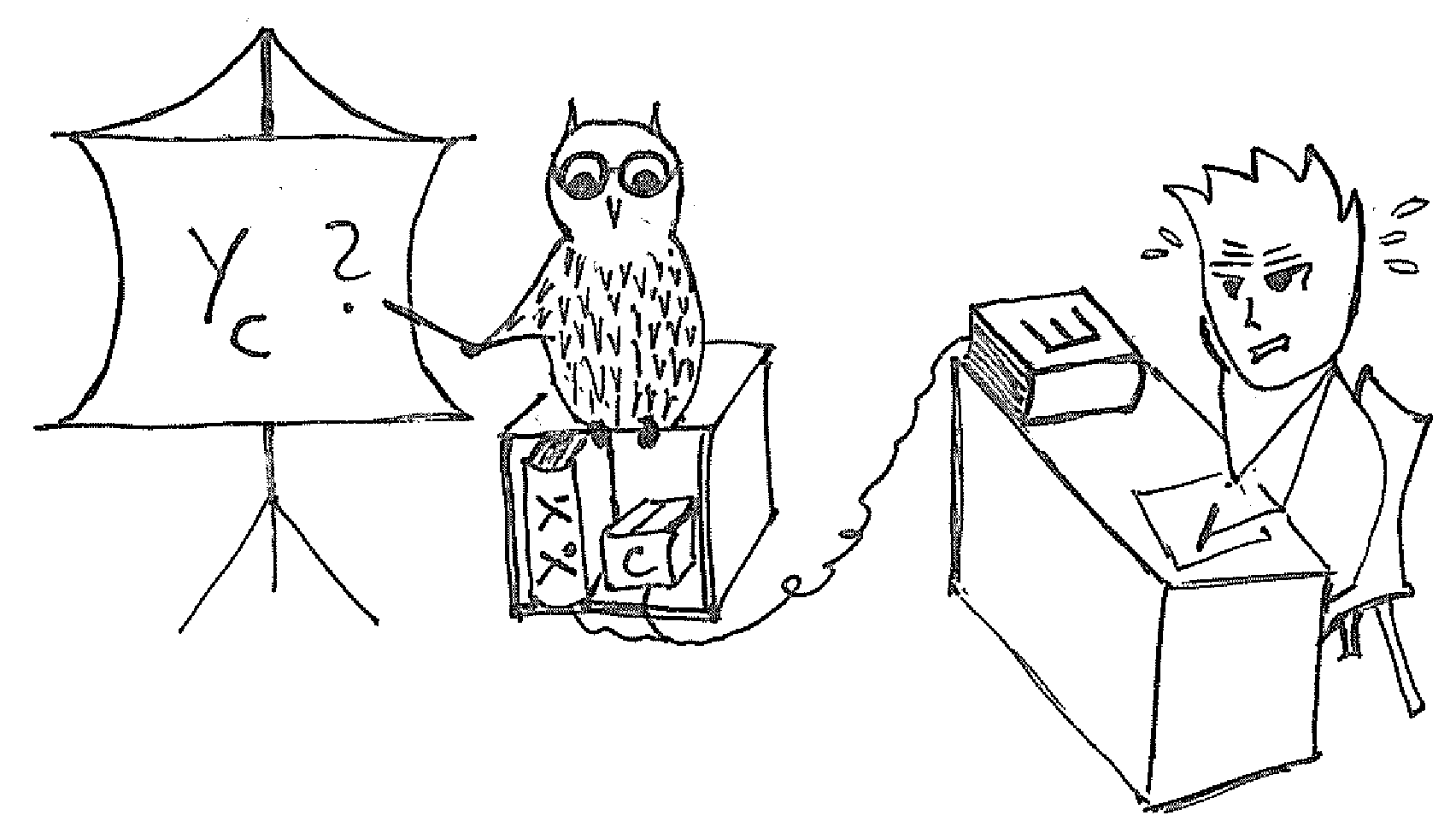}
	\caption{{\scriptsize Intuitively, one can also understand our result in terms of a game between Bob and a malicious challenger, the Owl. Imagine Bob is taking a philosophy class teaching him knowledge about $Y$, clearly chosen uniformly at random. Unfortunately, he never actually attended and had insufficient time to prepare for his exam. Luckily, however, he has been given some encoding $E$ of the possible answers $Y_0Y_1$, hastily prepared by his old friend Alice. When entering the room, he had to submit $E$ for inspection to the challenger who knows $Y_0$, $Y_1$ as well as the encoding Alice might use. After inspection, the challenger may secretly keep a 
	system $C$, possibly correlated with $E$, but such that the reduced system on $Y_0$, $Y_1$ and $E$ looks untampered with.
	It is immediately obvious to the challenger that Bob must be ignorant about the whole of $Y_0Y_1$. But can it always measure and point to a $C=c$ 
	such that Bob is ignorant about $Y_C$? That is, can it always detect Bob's ignorance by challenging him to output a single $Y_C$?
	Classically, this is indeed possible: ignorance about the whole of $Y_0Y_1$ implies significant ignorance about one of the parts, $Y_C$.
	However, a quantum Bob could beat the Owl.}}
	\label{fig:game}
\end{figure}
\end{center}

\section*{PROOF OF THE QUANTUM VIOLATION}

We now provide an outline of the proof that the encoding specified in~\eqref{eq:encoding} leads to a quantum violation of the splitting
inequality~\eqref{eq:classicalSplitting}; for completeness, we provide a more detailed derivation in the appendix.
Our proof proceeds in three steps: first, by computing $\hmin(Y|E)$ we show that the encoding does indeed not reveal much information about the whole.
Second, we compute the optimal measurements for extracting $Y_0$ and $Y_1$ on average, and show that these measurements perform equally well
for any other prior distribution on $Y$.
Finally, we show that even introducing an additional system $C$ does not change one's ability to extract $Y_C$ from the encoding.

\smallskip
\noindent
{\bf Step 1: } 
Very intuitively, ignorance about the whole string already follows from Holevo's theorem and the fact
that we are trying to encode 2 dits into a $d$-dimensional quantum system.
To see this more explicitly, recall that $\hmin(Y|E) = \log d$ is equivalent to showing that
$P_{\rm guess}(Y|E) = 1/d$. 
From~\eqref{eq:guessingProb} we have that this guessing probability is given by the solution to the following semidefinite program (SDP)
\begin{sdp}{maximize}{$\frac{1}{d^2} \sum_{y_0,y_1} \tr\left(M_{y_0y_1}\senc\right)$}
& $M_{y_0y_1} \geq 0 \mbox{ for all } y_0,y_1$\ ,\\
& $\sum_{y_0,y_1} M_{y_0,y_1} = \id$\ .
\end{sdp}
The dual SDP is easily found to be
\begin{sdp}{minimize}{$\Tr(Q)$}
& $Q \geq \frac{1}{d^2} \senc \mbox{ for all } y_0,y_1$\ .
\end{sdp}
Let $v_{\rm primal}$ and $v_{\rm dual}$ be the optimal values of the primal and dual respectively. By the property of weak duality, $v_{\rm dual} \geq v_{\rm primal}$ always holds. Hence, to prove our result, we only need to find a primal and dual solutions for which $v_{\rm primal} = v_{\rm dual} = 1/d$.  
It is easy to check that $\hat{Q} = \id/d^2$ is a dual
solution with value $v_{\rm dual} = \tr(\hat{Q}) = 1/d$.
Similarly, consider the measurement $M_{y_0y_1} = \senc/d$. Using Schur's lemma, one can directly verify that
$\sum_{y_0,y_1} M_{y_0y_1} = \id$, giving $v_{\rm primal} = 1/d$. The claimed value of the conditional min-entropy follows.

\smallskip
\noindent
{\bf Step 2: }
A similar argument, exploiting the symmetries in the encoding, can be used to show that
\begin{align}\label{eq:racRecover}
	P_{\rm guess}(Y_0|E) = P_{\rm guess}(Y_1|E) = \frac{1}{2} + \frac{1}{2\sqrt{d}}\ . 
\end{align}
The measurements that attain these values are given by the eigenbases of $Z_d$ and $X_d$ respectively.

As a remark to quantum information theorists, note that this means that our encoding doubles
as a random access encoding of the string $y$ into a $d$-dimensional quantum state $\rho_y$ with
probability~\eqref{eq:racRecover} to recover $y_0$ or $y_1$.
For $d=2$, such encodings have previously been considered in the realm of contextuality
as a reinterpretation of the CHSH inequality~\cite{chsh,Spekkens2009}.
However, we note that this is \emph{not} what is surprising here, as
there exists an obvious classical random access encoding for 2 dits into a single dit (see discussion on $C$ above), with recovery probability
$1/2 + 1/(2d)$.

Simply computing~\eqref{eq:racRecover} is hence insufficient for our purposes. Let us write $\{\ket{y_0},\,y_0 \in \{0,\ldots,d-1\}\}$ for
the eigenbasis of $Z_d$, and note that its Fourier transform $\{F\ket{y_1},\,y_1 \in \{0,\ldots,d-1\}\}$ is then the eigenbasis of $X_d$. 
Exploiting the symmetries in our problem, it is straightforward to verify that for all $y_0,y_1 \in \{0,\ldots,d-1\}$
\begin{align}
	|\langle y_0|\Psi_{y_0y_1}\rangle|^2 = 
	|\langle y_1|F^\dagger|\Psi_{y_0y_1}\rangle|^2 = 
	\frac{1}{2} + \frac{1}{2 \sqrt{d}}\ .
\end{align}
An important consequence of this is that for \emph{any} other prior distribution $P_{y_0y_1}$,
measurement in the $Z_d$ eigenbasis
distinguishes the states
\begin{align}
	\sigma_{y_0} = \sum_{y_1} P_{y_0y_1}(y_0,y_1) \senc\ ,
\end{align}
with probability at least $1/2 + 1/(2\sqrt{d})$, even when the distribution is unknown.
A similar argument can be made for the marginal states $\sigma_{y_1}$ and measurement in the $X_d$ eigenbasis.

\smallskip
\noindent
{\bf Step 3: }
It now remains to show that, for any possible choice of an additional classical system 
$C$~\cite{endnote25},
one can still guess $Y_C$ from the encoding with a good success probability: one cannot construct a $C$ which would ``point to the unknown''.
Note that we may express the joint state with any other system $C$ as
\begin{align}
\rho_{Y_0Y_1 E C} = \frac{1}{d^2} \sum_{y_0 y_1} \proj{y_0} \otimes \proj{y_1} \otimes \rho_{y_0y_1c}^{EC}\ ,
\end{align}
for some states $\rho_{y_0y_1c}^{EC}$ on registers $E$ and $C$.
Since the reduced state on $Y_0$,$Y_1$ and $E$ should be the same for any $C$
we have by the fact that $Y_0$ and $Y_1$ are classical that $tr_{C}(\rho_{y_0y_1c}^{EC}) = \senc$. Since $\senc$ is a pure state, this implies
that $\rho_{y_0y_1c}^{EC} = \senc \otimes \sigma_{y_0 y_1}^C$. 
Now imagine that we were to perform some arbitrary measurement on $C$, whose outcome would supposedly point to an unknown substring. But this merely creates a different distribution $P_{y_0y_1}$ 
over encoded strings, and we already know from the above that we can still succeed in retrieving either $y_0$ or $y_1$ 
with probability at least $1/2 + 1/(2\sqrt{d})$ by making a measurement in the $X_d$ or $Z_d$ basis respectively.
Hence for large $d$ we have a recovery probability of roughly $1/2$, implying
\begin{align}
	\hmin(Y_0|EC=0) \approx \hmin(Y_1|EC=1) \approx 1\ ,
\end{align}
which is our main claim.

Note that the consistency condition, which states that our choice of $C$ should be compatible
 with the original situation and should not affect the reduced state, is important, and
makes our task non-trivial. As an example, consider our construction for $d=2$. In that case 
the encoding states lie in the XZ-plane of the Bloch sphere. 
Imagine now that we measured the encoding register $E$ in the eigenbasis of $\sigma_y$, and let the outcome be $C$. But for 
any measurement in the eigenbasis of $\sigma_y$ we observe entirely random outcomes, and the post-measurement states trivially no longer carry any information about the 
encoded string. Indeed, any choice of $C$ would do if we are allowed to destroy information in such a manner.

\section*{IMPLICATIONS FOR CRYPTOGRAPHY}

Our result answers an interesting open question in quantum cryptography~\cite{chris:talk}, 
namely whether min-entropy splitting can still be performed
when conditioned on quantum instead of classical knowledge. This technique was used to deal with classical side information $E$
in~\cite{serge:bounded,serge:new}.
Our example shows that quantum min-entropy splitting is impossible, even when we would
be willing to accept subtracting a large error term on the r.h.s. of~\eqref{eq:classicalSplitting}. 
This tells us that classical protocols that rely on such statements may become insecure in the presence of quantum side information, 
and highlights the importance of so-called min-entropy sampling results of~\cite{kr:sampling}
used in quantum cryptography~\cite{KoeWehWul09} instead. It also indicates that contextuality may play a more important role in our understanding
of the possibilities and limits of quantum cryptography than previously thought.

\section*{DISCUSSION}

The first indication that something may be amiss when looking at knowledge from a quantum perspective was given by Schr{\"o}dinger~\cite{schroedinger:eprGerman}, who
pointed out that one can have \emph{knowledge} (not ignorance) about the whole, while still being ignorant about the parts~\cite{endnote26}.
Here, we tackled this problem from 
a very different direction, starting with the premise that one has ignorance about the whole. 

Our results show that contextuality is responsible for much more significant effects than have previously been noted. 
In particular, it leads to arbitrarily large quantum violations of~\eqref{eq:classicalSplitting}, which can be understood
as a Bell-type inequality for non-contextuality. This is still true even for a somewhat ``helpful'' physical model, leaking 
additional bits of information. To our knowledge, this is the first \emph{information-theoretic} inequality distinguishing
NC-HV models from quantum theory. Our question and perspective are completely novel, and we hope that our observations will lead to an increased understanding
of the role of contextuality. In this work, we have considered standard NC-HVs in which all HVs can be decomposed as convex combinations of extremal HVs which give deterministic outcomes for effects (see appendix). It is an interesting open question whether our results can be generalized to very general models that distinguish between measurement and preparation contextuality~\cite{Spekkens2005}.

%

At the heart of our result lies the fact that contextuality allows for strong forms of complementarity in quantum mechanics (often conflated with 
uncertainty~\cite{js:urvsnl}),
which intuitively is responsible for allowing the violation of~\eqref{eq:classicalSplitting}. 
Typically, complementarity is discussed by considering examples of properties of a physical system that one may be able to determine individually, but which cannot all be learned at once. In spirit, this is similar to the notion of a random access encoding where we could determine either property 
$Y_0$ or $Y_1$ quite well, but not all of $Y$. However, as discussed above this can also be true classically, in a probabilistic sense.
We would thus like to emphasize the novelty of our perspective, as we approach the problem from the other end, and first demonstrate
the general result that in an NC-HV ignorance about the whole always implies ignorance about a part. We then show that
in a quantum world, this principle is violated in the strongest possible sense, even with respect to an additional system $C$. 
One could think of this as a much more robust way of capturing the intuitive notion of complementarity~\cite{js:inprep}.

Finally, it is an interesting open question whether our inequality can be experimentally verified. Note that this made difficult by the fact that our aim would be to test \emph{ignorance} rather than knowledge. However, it is conceivable that such an experiment can be performed by building a larger cryptographic protocol whose security relies on being ignorant about one of the parts of a string $Y$ created during that protocol~\cite{endnote27}.
A quantum violation could then be observed by breaking the security of the protocol, and exhibiting \emph{knowledge} (rather than ignorance) about some information that could not have been obtained if the protocol was secure.

\acknowledgments
We thank Jonathan Oppenheim, Christian Schaffner, Tony Short, Robert Spekkens and CQT's ''non-local club'' for useful comments. 
We are particularly grateful to Tony Short for pointing out that our problem could more easily 
be explained by means of the game depicted in Figure~\ref{fig:game}.
TV was supported by ARO Grant W911NF-09-1-0440 and NSF Grant CCF-0905626.
SW was supported by the National Research Foundation, and the Ministry of Education, Singapore.
TV is grateful to CQT, Singapore, for hosting him while part of this work was done. SW is grateful for an invitation from the Mittag-Leffler Institute, Sweden, where part of this work was performed.

\appendix
\bigskip

\noindent
In this appendix, we provide a detailed derivation of our results. To this end, we first provide some more detailed background on the entropic quantities we use in Section~\ref{sec:entropy}. 
In Section~\ref{sec:lhv} we show that the splitting inequality~\eqref{eq:splitting} is satisfied in any deterministic non-contextual hidden variable model (NC-HV model for short). This is a minor twist
on the existing classical proof~\cite{serge:new} due to Wullschleger~\cite{juerg:splitting}.
Finally, in Section~\ref{sec:encoding}, we proceed to prove our main result, that there exists
a quantum encoding which strongly violates the splitting inequality~\eqref{eq:splitting}.

\section{Entropy measures}\label{sec:entropy}

Throughout, we will measure information in terms of the min-entropy, which is directly related to the 
\emph{guessing probability} $P_{\rm guess}(Y|E)$~\cite{krs:entropy}, where $Y$ is a classical string ranging in the set $\cY$ and $E$ an auxiliary system. It is defined as the maximum probability with which one can predict the whole string $Y$, given the system $E$. The maximization is over all possible observations, or measurements, on $E$; these vary depending on the physical model (e.g. classical or quantum) under consideration. 

\begin{definition}\label{def:ncguess}
Let $Y$ be a classical random variable with distribution $P_Y$ taking values in a set $\cY$, and $\{ \cP_{y}\}_{y\in \cY}$ any set of preparations on $E$. Then the maximum guessing probability of $Y$ given $E$ is defined as
\begin{align}\label{eq:pguessdef}
	P_{\rm guess}(Y|E) = \max_{\{\cM_y\}} \sum_{y\in \cY} P_Y(y)\,p(y|\cP_y,\cM_y)\ ,
\end{align}
where the maximum is taken over all measurements $\cM = \{\cM_{y}\}$ allowed in the model.
\end{definition}
For instance, in the case of quantum mechanics we simply have 
\begin{align}\label{eq:quantumguess}
P_{\rm guess}(Y|E) := \max_{\substack{\{M_y\}_{y \in \mathcal{Y}}\\
\forall y M_y \geq 0\\
\sum_y M_y = \id}}  \sum_y P_Y(y) \tr\left(M_y \rho^E_y\right)\ ,
\end{align}
where the maximization is taken over all POVMs~\cite{endnote28} and
$\rho_y^E$ denotes the reduced state of the system on $E$, when $Y=y$. For \emph{classical} side-information $E$ this expression simplifies to
\begin{align}\label{eq:classicalguess}
P_{\rm guess}(Y|E) := \mathbb{E}_{e\leftarrow E}\left[\max_{y} P_{Y|E=e}(y)\right]\ .
\end{align}
In other words, for classical side information, the optimal guessing measurement is to 
simply output the $y$ which is most likely given the classical value $e$.

For the case of classical and quantum theories it is known that the guessing probability directly relates to the conditional 
min-entropy~\cite{krs:entropy}.
Here, we follow the operational approach of~\cite{ts:entropy}, and define the conditional min-entropy for an arbitrary theory with classical $Y$ as
\begin{align}\label{eq:pguesshmin}
\hmin(Y|E) := - \log P_{\rm guess}(Y|E)\ .
\end{align}
For the case of quantum systems, the conditional min-entropy was first introduced by Renner~\cite{renato:diss} as a way to measure randomness conditioned on an adversary's knowledge. 
The min-entropy can also be defined when $Y$ is quantum itself~\cite{renato:diss}, but we will not need it here.
In the quantum setting, we will also use a smoothed version of the quantum conditional min-entropy, defined for any $\eps>0$ as:
\begin{align}
\hmineps(Y|E) = \max_{\tilde{\rho}_{YE} \in \mathcal{B}_\eps(\rho_{YE})} \hmin(Y|E)_{\tilde{\rho}_{YE}}\ ,
\end{align}
where the maximization is taken over all (subnormalized) states $\tilde{\rho}_{YE}$ within $\eps$ trace distance of $\rho_{YE}$.
A similar definition could be made for
arbitrary theories using the distance defined in~\cite{ts:entropy}, but we will not require it here.

The conditional min-entropy has a number of appealing properties, which for any NC-HV model essentially follow from its operational interpretation,
and also hold in the quantum setting~\cite{renato:diss}. First of all consider the min-entropy of classical $YZ$ conditioned on side information $E$. Clearly, since guessing $Y$ \emph{and} $Z$ can only be more difficult then guessing $Y$ alone, we have $P_{\rm guess}(YZ|E) \leq P_{\rm guess}(Y|E)$. Translated, this gives monotonicity of the min-entropy
\begin{align}\label{eq:monotone}
\hmin(YZ|E) \geq \hmin(Y|E)\ .
\end{align}
Similarly, to guess $Y$ \emph{and} $Z$ from $E$ one strategy would be to guess $Z$ (in the worst case, choosing $Z=z$ with $z \in \mathcal{Z}$ taken from the uniform distribution) and then try to guess $Y$ knowing $Z$. In terms of guessing probabilities, this means that 
$P_{\rm guess}(YZ|E) \geq P_{\rm guess}(Y|EZ)/|\mathcal{Z}|$. Translated, we obtain the chain rule
\begin{align}\label{eq:chain}
\hmin(Y|EZ) \geq \hmin(YZ|E) - \log |\mathcal{Z}|\ .
\end{align}
A final property that will be important to us is that, as a direct consequence of~\eqref{eq:pguesshmin} we may also write the min-entropy as
\begin{align}\label{eq:classicalMeasure}
\hmin(Y|E) = \min_{\mathcal{M}} \hmin(Y|\mathcal{M}(E))\ ,
\end{align}
where the minimization is taken over all measurements $\mathcal{M}$, and $\hmin(Y|\mathcal{M}(E))$ is the min-entropy
conditioned on the classical information obtained by measuring $E$ with $\mathcal{M}$.

\section{A splitting inequality valid in any NC-HV model}\label{sec:lhv}

Before turning to the proof of the generalized splitting inequality, let us briefly review what is meant by a non-contextual model. 
In any physical theory, we can imagine that a system is prepared according to some \emph{preparation} $\cP$, on which we later make \emph{measurements} $\cM$. 
Each such measurement can be viewed as a collection of elementary effects $\me$. The exact form of the effects depends on the model one considers.
For example, in quantum theory the effects are simply given by POVM elements. 
A particularly useful effect is given by the so-called \emph{unit effect} $\id$, corresponding to the identity
in the quantum or classical setting. Hence to any effect one can associate a two-outcome measurement $\mathcal{M}_{\me} = \{\me,\id - \me\}$. 

When discussing non-contextuality, this
 measurement is typically interpreted as a question one might pose to the underlying physical system and has two answers, ``yes'' for $\me$ and ``no'' for $\id - \me$.  
We hence also refer to $\me$ as a question.
Of course, one might consider measurements that ask many questions simultaneously, that is, they consist of many individual effects.
Two effects are called \emph{compatible} if the corresponding questions can be answered simultaneously without causing disturbance to the underlying physical system, in the sense that we would obtain
the same answers again were we to ask the same questions repeatedly. 

A set of mutually compatible effects/questions is thereby called a \emph{context}. For example, if $\me_1$ is compatible with $\me_2$ the set $\mathcal{C}_1 = \{\me_1,\me_2\}$ is called 
a \emph{context}. Similarly, if $\me_1$ is compatible with $\me_3$, then the set $\mathcal{C}_2 = \{\me_1,\me_3\}$ is also a context. Note, however, that in such a scenario it can still 
be that $\me_2$ and $\me_3$ are \emph{not} compatible. That is, any effect can be part of multiple \emph{distinct} contexts. 

For each effect in a particular context, one can pose the question $\me$ by making the measurement $\mathcal{M}_{\me}$ defined above. Informally, a model is 
called \emph{non-contextual} if the answer to question $\me_1$ will always be the same in both contexts, whether $\mathcal{M}_{\me_2}$ \emph{or} $\mathcal{M}_{\me_3}$ are performed simultaneously (which is possible by definition of being compatible). In our example this means that if 
 were we to make measurement $\mathcal{M}_{\me_1}$ in context $\mathcal{C}_1$, or context $\mathcal{C}_2$, we would always obtain the same distribution on outcomes.

\subsection{Classical theory}\label{sec:purelyClassical}

Recall that the 
phenomenon of min-entropy splitting guarantees that, if a string $Y_0Y_1$ has high min-entropy then there is a way to split it by introducing a binary random variable $C$ such that the string $Y_{C}$ 
has about half as much min-entropy as $Y_0Y_1$. Classically, min-entropy splitting follows from the following statement.

\begin{lemma}[\cite{serge:new}, Lemma~4.2]\label{lem:classicalsplitting} Let $\eps>0$ and $Y_0,Y_1$ two random variables such that $\hmineps(Y_0Y_1|E)\geq \alpha$, where $E$ is classical. 
	Then, there exists a binary random variable $C$ such that $\hmineps(Y_{C}C|E) \geq \alpha/2$.
\end{lemma}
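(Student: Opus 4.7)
The plan is to define $C$ explicitly as a function of $(E, Y_0)$ that ``points to the harder part'' of the string, and then bound the guessing probability $P_{\mathrm{guess}}(Y_C C | E)$ by a case analysis combined with Cauchy--Schwarz. First I would treat the non-smooth case, assuming $\hmin(Y_0 Y_1 | E) \geq \alpha$. For each value $e$ of $E$, set $r_e(y_0, y_1) := P(Y_0 = y_0, Y_1 = y_1 | E = e)$, $q_e(y_0) := P(Y_0 = y_0 | E = e)$, and $s_e := \max_{y_0, y_1} r_e(y_0, y_1)$; the min-entropy hypothesis then reads $\sum_e P_E(e)\, s_e \leq 2^{-\alpha}$. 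Define $C := 1$ when $q_e(y_0) > \sqrt{s_e}$ and $C := 0$ otherwise.

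With this definition, a short case analysis bounds $\max_{y,c} P(Y_C = y, C = c | E = e) \leq \sqrt{s_e}$: if $c = 0$ then $P(Y_0 = y, C = 0 | E = e) \leq q_e(y) \leq \sqrt{s_e}$ by construction, while if $c = 1$ then $P(Y_1 = y, C = 1 | E = e) = \sum_{y_0 :\, q_e(y_0) > \sqrt{s_e}} r_e(y_0, y)$, a sum with fewer than $1/\sqrt{s_e}$ non-zero terms (since the $q_e(y_0)$ sum to at most $1$) each of size at most $s_e$, so again bounded by $\sqrt{s_e}$. Weighting by $P_E$ and applying Cauchy--Schwarz,
\[
P_{\mathrm{guess}}(Y_C C | E) \leq \sum_e P_E(e)\sqrt{s_e} \leq \sqrt{\sum_e P_E(e)}\,\sqrt{\sum_e P_E(e)\, s_e} \leq 2^{-\alpha/2},
\]
giving $\hmin(Y_C C | E) \geq \alpha/2$.

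To upgrade to the smoothed statement, I would invoke the definition of $\hmineps$: there exists a nearby (sub)distribution $\tilde{P}_{Y_0 Y_1 E}$ at statistical distance $\leq \eps$ from $P$ with $\hmin(Y_0 Y_1 | E)_{\tilde P} \geq \alpha$. Apply the construction above to $\tilde P$ to obtain a binary function of $(E, Y_0)$, and use that same function to define $C$ on the original probability space. Since $(Y_0, Y_1, E) \mapsto (Y_C, C, E)$ is a deterministic map, statistical distance is non-increasing, so the induced distribution of $(Y_C, C, E)$ under $P$ stays within $\eps$ of that under $\tilde P$, yielding $\hmineps(Y_C C | E)_P \geq \hmin(Y_C C | E)_{\tilde P} \geq \alpha/2$.

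The main obstacle is choosing the threshold correctly: a constant threshold such as $2^{-\alpha/2}$ does \emph{not} work, because the hypothesis only controls the $P_E$-average of $s_e$ and not each $s_e$ individually. The $e$-dependent choice $\sqrt{s_e}$ together with Cauchy--Schwarz is what balances the two branches of the case analysis evenly and recovers exactly the $\alpha/2$ loss. Everything else (the case analysis and the smoothing transfer) is mechanical once this choice is in place.
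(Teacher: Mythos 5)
Your proof is correct. One point worth flagging up front: the paper does not actually prove this lemma — it cites it as Lemma~4.2 of~\cite{serge:new}; what the paper proves in the appendix is the NC-HV generalization (Theorem~\ref{thm:mainsplitting}), which is adapted from the classical argument. So the right comparison for your argument is against that theorem's proof and the classical proof it cites.

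The interesting difference is the threshold. Both the classical proof of~\cite{serge:new} and the paper's NC-HV generalization define $C$ by comparing a marginal posterior probability against a \emph{fixed} threshold $2^{-\alpha/2}$; you instead use the $e$-dependent threshold $\sqrt{s_e}$ and then close the estimate with Cauchy--Schwarz, $\sum_e P_E(e)\sqrt{s_e}\le\bigl(\sum_e P_E(e)\,s_e\bigr)^{1/2}\le 2^{-\alpha/2}$. Your observation about why this matters is apt: since $\hmin(Y_0Y_1|E)$ in this paper (following~\cite{krs:entropy}) is defined via the $P_E$-\emph{averaged} guessing probability, the hypothesis only controls $\sum_e P_E(e)\,s_e$ and not each $s_e$; with the fixed threshold $t$ one bounds $\max_{y,c}P(Y_C{=}y,C{=}c\,|\,E{=}e)\le\max(t,\,s_e/t)\le t+s_e/t$, and averaging over $e$ and optimizing $t$ gives only $2\cdot 2^{-\alpha/2}$, i.e.\ $\hmin(Y_CC|E)\ge\alpha/2-1$. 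Your adaptive threshold eliminates this additive bit and recovers $\alpha/2$ exactly. (This is largely cosmetic here, since the Corollary~\ref{cor:mainsplitting} the paper actually needs already carries a $-1$ from the chain rule, and the quantum violation is by $\Theta(\log d)$; but your version of the lemma is tighter and your diagnosis of the issue is correct.) The smoothing transfer is also fine: $C$ is a deterministic function of $(E,Y_0)$ built from the smoothed distribution, and since applying a fixed deterministic map cannot increase trace distance, the smoothed bound for the original distribution follows directly. One small thing to keep in mind if you ever want to match the paper's Theorem~\ref{thm:mainsplitting} exactly: there the role of $C$ is to point to the part with \emph{low} posterior, which is what $Y_C$ should be for the min-entropy to be large, so depending on your labeling convention you may want $Y_{1-C}$ where you wrote $Y_C$; as a labeling choice this has no mathematical content, but it is worth being consistent with the statement being proved.
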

Using the chain rule~\eqref{eq:chain} and the monotonicity~\eqref{eq:monotone} of the min-entropy one immediately obtains 
the statement of min-entropy splitting
\begin{align}\label{eq:splittingStatement}
\hmineps P(Y_{C}|EC)\geq \alpha/2 -1 - \log 1/\eps'
\end{align}

\subsection{Non-contextual hidden variable models}\label{sec:nchv}

Typically, in a non-contextual hidden variable model it is assumed that a preparation $\cP$ is simply a distribution over hidden variables $\Lambda$, and a measurement then corresponds to ``reading out'' such hidden variables. Each outcome event $k \in K$ is associated with a corresponding effect $\me_k$, where intuitively $\me_k$ ``reads out'' the hidden variables by mapping a certain subset of possible hidden variables to the outcome $k \in K$.
In contrast, some works consider more generalized scenarios known as \emph{ontological models}~\cite{Spekkens2005}. The main difference here is that these hidden variable models can locally model even contextual theories, but specify explicit conditions to make these generalized theories non-contextual again.

 In this section we show that the splitting inequality holds in any standard \emph{deterministic} NC-HV, which is the definition taken in most previous work, as in e.g.~\cite{klyachko:nc,andreas:nc}. We will, however, phrase our result in the general language of non-contextual models as introduced in~\cite{Spekkens2005}, restricting our attention to those models which are deterministic.

\subsubsection{Background}

Very intuitively, a non-contextual \emph{ontological model} for an operational theory associates intrinsic attributes to every physical system, which are supposed to exist independently of the particular context in which the system might be observed. These attributes are described by a set of hidden variables~\cite{endnote29}
$\lambda \in \Lambda$. Hence for us a hidden variable model consists of the following:
\begin{enumerate}
\item A set of hidden variables $\Lambda$. 
\item For every preparation $\cP$ in the physical theory, 
a probability distribution $p(\lambda|\cP)$ over $\lambda\in\Lambda$.
\item For every $\ell$-outcome measurement $\cM$, and hidden variable $\lambda \in \Lambda$, a probability distribution $p(k|\lambda,\cM)$ over $k \in [\ell] := \{1,\ldots,\ell\}$. 
\end{enumerate}
The model is indeed a model for the physical theory if it accurately predicts the outcome distribution of any measurement on any preparation, i.e. performing measurement $\cM$ on preparation $\cP$ produces outcome $k$ with probability 
\begin{align}
p(k|\cP,\cM) = \sum_\lambda p(k|\lambda,\cM )p(\lambda|\cP)\ ,
\end{align}
where for notational simplicity we assume that $\Lambda$ is discrete.


\emph{Effects.} We adopt the common notion that measurements are a collection of elementary effects. Here, an 
effect is a linear functional $\me_k: \Lambda \rightarrow [0,1]$, mapping hidden variables to outcomes. 
As is common in the study of non-contextuality~\cite{andreas:nc}, we will consider only measurements which are a collection
of deterministic effects $\me_k: \Lambda \rightarrow \{0,1\}$. 
That is, we effectively work with a \emph{deterministic} model. Much more general scenarios are certainly possible~\cite{Spekkens2005}
but we will not consider them here. 
Note that a deterministic model does not mean that there is no more randomness:
preparations are given as probability distributions over hidden variables and hence we generally do observe non-deterministic outcomes
when measuring a preparation.
Of particular importance is the unit effect $\id$ (i.e, the identity), which obeys $\id(\lambda) = 1$ 
for all $\lambda \in \Lambda$. A measurement is thus a collection $\mathcal{M} := \{\me_k\mid \sum_k \me_k = \id\}$, where we usually index the effects by the outcome that they give in $\mathcal{M}$. 
We write the probability of obtaining the outcome $k$ using
measurement $\mathcal{M}$ containing the effect $\me_k$ as
\begin{align}
p(k|\lambda,\me_k) := p(k|\lambda,\mathcal{M}) = \me_k(\lambda)\ .
\end{align}
Note that with every effect, we can again associate a two-outcome measurement $\mathcal{M}_{\me} = \{\me,\id - \me\}$\, where without loss of generality we label $\me$ using the outcome '1' and $\id - \me$ using the outcome '0'. When concerned with such a measurement $\mathcal{M}_{\me}$ we thus also 
use $p(1|\lambda,\me)$ and $p(0|\lambda,\id - \me)$ to denote the probabilities of obtaining outcomes '1' and '0' respectively.

\medskip

\emph{Extensions.} Often we wish to relate one physical system to another. For example, we may wish to perform an additional independent experiment such as flipping a coin.
Given a system with a set of hidden variables $\Lambda$, we allow its extension to a second system in the following way: if $\Lambda'$ is another set of hidden variables used to describe another physical system, 
then the combined system will have hidden variables $\Lambda \times \Lambda'$. For every preparation $\cP$ on the original system, we say that $\cP'$ is an extension of $\cP$ in the combined system if for every $\lambda\in \Lambda$ 
\begin{align}
	p(\lambda|\cP) = \sum_{\lambda'\in \Lambda'} p'((\lambda,\lambda')|\cP')\ . 
\end{align}
A measurement $\cM'$ is similarly said to extend $\cM$ as long as 
\begin{align}
p(j|\lambda,\cM) = \sum_{\lambda'} p'(j|(\lambda,\lambda'),\cM')\ .
\end{align}

\emph{Preparations.} To study our problem, we will assume that there is an implicit prior distribution on preparations $\cP$ describing prior knowledge about the state of the system under consideration. 
More specifically, we will be concerned with encodings of a string $y$ into preparations $\cP_{y}$, where the probability $P_Y(y)$ of choosing the string $y$ translates into a prior probability on the preparation as
\begin{align}
	p(\cP_y) := P_{Y}(y)\ .
\end{align}


\subsubsection{Splitting inequality}

We are now ready to generalize Lemma~\ref{lem:classicalsplitting} to any deterministic NC-HV model. The analogue of~\eqref{eq:splittingStatement} is then an easy corollary. Note that in this statement, the conditional min-entropy is understood as being defined through the guessing probability~\eqref{eq:pguessdef} as in equation~\eqref{eq:pguesshmin}. This assumes given a fixed distribution $P_{Y_0Y_1}$ on the strings $y_0y_1$, through which a prior distribution on the preparations $\cP_{y_0y_1}$ follows as explained at the end of Section~\ref{sec:nchv}.

\begin{theorem}\label{thm:mainsplitting}
Let a NC-HV model $\mathfrak{M}$ be given, with corresponding set of hidden variables $\Lambda$. Let $Y=Y_0Y_1$ be two classical random variables each taking values in a finite set $\cY$, and $\{ \cP_{y_0y_1}\}_{(y_0,y_1)\in\cY^2}$ a corresponding fixed set of preparations on a register $E$ such that
\begin{align}
	\hmin(Y_0Y_1|E) \geq \alpha\ .
\end{align}
Then there exists an extended model $\mathfrak{M}'$ over the set of hidden variables $\Lambda'= \Lambda \times \{0_C, 1_C\}$, and a set of preparations $\cP'_{y_0y_1c}$, for $c\in\{0,1\}$, extending the $\cP_{y_0y_1}$ and such that
\begin{align}
\hmin(Y_{C}\, C|E) \geq \frac{\alpha}{2}\ .\label{eq:splitting}
\end{align}
\end{theorem}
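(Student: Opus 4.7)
The plan is to adapt Wullschleger's classical min-entropy splitting argument to the NC-HV setting. The enabling observation is that because every deterministic effect is $\{0,1\}$-valued on $\Lambda$, a $d^2$-outcome measurement is nothing more than a partition $g:\Lambda\to\cY^2$, and the optimal such choice is $g(\lambda)=\arg\max_{y_0,y_1} P_{Y_0Y_1|\lambda}(y_0,y_1)$. Writing $p(\lambda) = \sum_{y_0,y_1} P_{Y_0Y_1}(y_0,y_1)\, p(\lambda|\cP_{y_0y_1})$ and $q(\lambda)=\max_{y_0,y_1} P_{Y_0Y_1|\lambda}(y_0,y_1)$, one obtains the classical-looking identity $P_{\rm guess}(Y_0Y_1|E) = \sum_\lambda p(\lambda)\,q(\lambda)$. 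From here the hidden variable $\lambda$ plays exactly the role that classical side information $e$ plays in Wullschleger's original proof, so the rest is essentially a transplant of the classical argument.

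First I would define a deterministic splitting function $c:\Lambda\times\cY^2\to\{0,1\}$ by $c(\lambda,y_0,y_1)=0$ when $P_{Y_0|\lambda}(y_0)\leq P_{Y_1|\lambda}(y_1)$ and $c=1$ otherwise, and build the extended model on $\Lambda'=\Lambda\times\{0_C,1_C\}$ by placing a deterministic copy of $c(\lambda,Y_0,Y_1)$ in the appended factor:
\[ p'((\lambda,b)\,|\,\cP'_{y_0y_1c}) \;\propto\; p(\lambda|\cP_{y_0y_1})\,\mathbbm{1}[c(\lambda,y_0,y_1)=c]\,\mathbbm{1}[b=c], \]
with normalization $P_{C|Y_0Y_1}(c|y_0,y_1):=\sum_\lambda p(\lambda|\cP_{y_0y_1})\mathbbm{1}[c(\lambda,y_0,y_1)=c]$. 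The extension condition $\sum_c P_{C|Y_0Y_1}(c|y_0,y_1)\sum_b p'((\lambda,b)|\cP'_{y_0y_1c})=p(\lambda|\cP_{y_0y_1})$ then holds by inspection.

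To bound $\hmin(Y_CC|E)$, note that in the extended model the adversary reads $b=C$ directly from $\lambda'$, so her optimal strategy sets the guess for $C$ equal to $b$ and optimizes $y$, yielding
\[ P_{\rm guess}(Y_CC|E) = \sum_{\lambda,c}\,\max_{y}\,\sum_{y_{1-c}} P_{Y_0Y_1\Lambda}(y,y_{1-c},\lambda)\,\mathbbm{1}[c(\lambda,y_0,y_1)=c] \]
(with $y$ placed in the $c$-th slot). For each $\lambda$ and $c$, the indicator restricts the sum to those $y_{1-c}$ with $P_{Y_{1-c}|\lambda}(y_{1-c})\geq P_{Y_c|\lambda}(y)$; this set has at most $1/P_{Y_c|\lambda}(y)$ elements (by Markov), while each summand is at most $p(\lambda)q(\lambda)$, so the whole sum is bounded by $\min\!\bigl(p(\lambda)P_{Y_c|\lambda}(y),\,p(\lambda)q(\lambda)/P_{Y_c|\lambda}(y)\bigr)\leq p(\lambda)\sqrt{q(\lambda)}$ via the elementary inequality $\min(a,b/a)\leq\sqrt{b}$. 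Summing over $c\in\{0,1\}$ and $\lambda$, then applying Jensen on the concave $\sqrt{\cdot}$, gives $P_{\rm guess}(Y_CC|E)\leq 2\sqrt{P_{\rm guess}(Y_0Y_1|E)}$, equivalent to $\hmin(Y_CC|E)\geq\alpha/2-1$; the additive constant is absorbed into the $\gtrsim$ of the main text (or removed by a sharper randomized choice of the splitting function).

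The hard part is conceptual rather than computational: one has to verify that promoting $C$ to a genuine hidden variable of the extended model preserves the original preparations $\cP_{y_0y_1}$ under marginalization, so that the inequality really does constrain the physics at hand rather than a post-processed version of it. Once determinism of the effects reduces everything to a statement about distributions on $\Lambda$, the argument is Wullschleger's classical one with $\lambda$ substituted for $e$.
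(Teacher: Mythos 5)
Your overall strategy coincides with the paper's: use determinism of the effects to reduce the side information to the hidden variable itself, so that the relevant quantity becomes $\sum_\lambda p(\lambda)\max_{y_0,y_1}p(\cP_{y_0y_1}|\lambda)\le 2^{-\alpha}$, then run Wullschleger-style splitting with $\lambda$ playing the role of the classical value $e$, realizing $C$ as a deterministic extra hidden variable appended to $\Lambda$ and checking that the new preparations $\cP'_{y_0y_1c}$ marginalize back to $\cP_{y_0y_1}$. That reduction, the construction of the extension, and the consistency check are all in line with the paper (modulo the degenerate case $P_{C|Y_0Y_1}(c|y_0,y_1)=0$, where the conditional preparation is undefined but can simply be dropped).

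The gap is quantitative but real: the theorem asserts $\hmin(Y_C\,C|E)\ge\alpha/2$ exactly, whereas your comparison rule $c(\lambda,y_0,y_1)$ based on whether $P_{Y_0|\lambda}(y_0)\le P_{Y_1|\lambda}(y_1)$, combined with $\min(a,b/a)\le\sqrt{b}$ and Jensen, only yields $P_{\rm guess}(Y_C\,C|E)\le 2\sqrt{P_{\rm guess}(Y_0Y_1|E)}$, i.e. $\alpha/2-1$; the factor $2$ comes from adding the two branch contributions, each bounded separately by $p(\lambda)\sqrt{q(\lambda)}$. Appealing to the ``$\gtrsim$'' of the main text is not available here --- the statement you are proving has no slack --- and the remark that a ``sharper randomized choice of the splitting function'' removes the constant is asserted, not proved. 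The paper instead uses the threshold rule of Damg{\aa}rd et al.: $C$ is determined by whether $\sum_{y_0}p(\cP_{y_0y_1}|\lambda)\ge 2^{-\alpha/2}$. On the below-threshold branch every candidate value of the designated half has posterior probability $<2^{-\alpha/2}$ pointwise in $\lambda$, while on the above-threshold branch at most $2^{\alpha/2}$ values survive the indicator, so the guessing probability there is at most $2^{\alpha/2}\max_{y_0,y_1}p(\cP_{y_0y_1}|\lambda)$, which the hypothesis bounds on average by $2^{\alpha/2}\,2^{-\alpha}=2^{-\alpha/2}$; this is how the constant-free bound $\alpha/2$ of the statement is obtained. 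If you swap your comparison rule for this threshold rule, the rest of your argument (the extension of the model and the bookkeeping over $\lambda$) goes through essentially unchanged.
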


\begin{proof} 
Recall that we assume a prior distribution on the preparations given by $p(\cP_{y_0 y_1}) := P_{Y_0Y_1}(y_0y_1)$. This lets us define the guessing probability, which by assumption is such that
\begin{align}\label{eq:wholeBound}
2^{-\alpha} \geq P_{\rm guess}(Y|E)\ .
\end{align}
To rewrite the r.h.s. in terms of hidden variables, first of all note that given the prior distribution over preparations
we can write the probability of a particular hidden variable $\lambda \in \Lambda$ as
\begin{align}\label{eq:lambdaProb}
p(\lambda) &= \sum_{y_0y_1} p(\cP_{y_0y_1}) p(\lambda|\cP_{y_0y_1})\ .
\end{align}
Fix a measurement $\cM = \{\me_k\}_k$, where we indexed the effects by their outcome in the measurement. By definition, the probability of observing the outcome $k$ when $\cM$ is performed on the preparation $\cP_{y_0y_1}$ is
\begin{align}\label{eq:pkGy}
p(k|\cP_{y_0y_1},\cM) = \sum_{\lambda} p(\lambda|\cP_{y_0y_1}) p(k|\lambda,\me_k)\ ,
\end{align}
The overall probability of observing the outcome $k$ when $\cM$ is performed on the preparation $\cP$ corresponding to the mixture of the preparations $\cP_{y_0y_1}$ with associated probabilities $p(\cP_{y_0y_1})$ is then
\begin{align}
p(k) := p(k|\cP) =  \sum_{y_0y_1} p(\cP_{y_0y_1}) p(k|\cP_{y_0y_1})\ .
\end{align}
Note that by definition the hidden variables $\lambda$ give deterministic outcomes under the measurement of any effect, and hence
$	p(k|\lambda,\me_k) = \lambda_{\me_k}$,
where $\lambda_{\me_k}$ is $1$ if the measurement $\{\me_k,\id-\me_k\}$ deterministically produces the outcome '$\me_k$' when performed on a system in state $\lambda$, and $0$ otherwise. 
Using Bayes' rule twice we obtain
\begin{align}
&p(k) p(\cP_{y_0y_1}|k) = p(\cP_{y_0y_1}) p(k|\cP_{y_0y_1})\notag\\
&\qquad=p(\cP_{y_0y_1}) \sum_{\lambda} p(\lambda|\cP_{y_0y_1}) p(k|\lambda,\me_k)\notag\\
&\qquad= \sum_{\lambda} p(\lambda)
p(k|\lambda,\me_k) p(\cP_{y_0y_1}|\lambda)\ .
\end{align}
Using~\eqref{eq:classicalguess} and~\eqref{eq:classicalMeasure} we obtain that for any measurement $\mathcal{M} = \{\me_k\}$, the guessing probability of $Y_0Y_1$ is determined by the maximum posterior probability of any string $y_0y_1$, conditioned on obtaining the outcome $k$ when measuring $\cP$ with $\cM$, so that~\eqref{eq:wholeBound} implies 
\begin{align}
2^{-\alpha} &\geq \sum_k p(k) \max_{y_0y_1} p(\cP_{y_0y_1}|k)\notag\\
 & = \sum_\lambda p(\lambda) \max_{y_0y_1} \left[\sum_k p(k|\lambda,\me_k) p(\cP_{y_0y_1}|\lambda)\right]\ .\label{eq:maxbound}
\end{align}
where in order to invert the summations over $k$ and $\lambda$ with the maximization we used the fact that for any $k$, there exists exactly one $\lambda$ such that $\lambda_{\me_k}=1$ and vice-versa, so that the summation over $\lambda$ (resp. over $k$) which is after the $\max$ in the expressions above contains exactly one term. This is a consequence of the fact that the $\me_k$ form a measurement, so that $\sum_k \me_k = 1$, together with the variables $\lambda$ being deterministic, so that $p(k|\lambda,\me_k)$ can only be either $0$ or $1$.

We now need to define the additional single-bit random variable $C$, which is intuitively supposed to designate which of the two halves, $y_0$ or $y_1$, the preparation $\cP_{y_0y_1}$ contains the least amount of information about, so that we can indeed lower-bound the min-entropy $\hmin(Y_C C|\cP)$. For this we allow $C$ to be correlated with the preparation
$\cP_{y_0y_1}$. 

In order to accommodate $C$, we extend the set of hidden variables $\Lambda'$ as $\Lambda' \times \{0_C,1_C\}$. 
Define $q_1 = \sum_\lambda p(\lambda|\cP_{y_0y_1})$, where the sum ranges over all $\lambda$ such that $\sum_{y_0}p(\cP_{y_0y_1}|\lambda) \geq 2^{-\alpha/2}$, and $q_0=1-q_1$. Note that $q_0$ can be computed by the same summation, but now ranging over all $\lambda$ such that $\sum_{y_0}p(\cP_{y_0y_1}|\lambda) < 2^{-\alpha/2}$. Define two preparations as follows: 
\begin{itemize}
\item $\cP_{y_0y_11}$ is defined through the distribution 
	\begin{align}
&p((\lambda,1_C)|\cP_{y_0y_11}) = \\
&\bigg\{\begin{array}{cl} p(\lambda|\cP_{y_0y_1})/q_1 &\text{if }\sum_{y_0}p(\cP_{y_0y_1}|\lambda) \geq 2^{-\alpha/2}\\[2mm] 0&\text{otherwise}\end{array}\nonumber
\end{align}
 and $p((\lambda,0_C)|\cP_{y_0y_11}) = 0$ for every $\lambda$.
\item $\cP_{y_0y_10}$ is defined analogously by
	\begin{align}
 &p((\lambda,0_C)|\cP_{y_0y_10}) = \\
 &\bigg\{\begin{array}{cl} p(\lambda|\cP_{y_0y_1})/q_0& \text{if }\sum_{y_0}p(\cP_{y_0y_1}|\lambda) < 2^{-\alpha/2}\\[2mm] 0&\text{otherwise}\end{array}\nonumber
 \end{align}
 and $p((\lambda,1_C)|\cP_{y_0y_10}) = 0$ for every $\lambda$.
\end{itemize}
Finally, we define the preparation $\cP_{y_0y_1\,C}$ as the mixture of $\cP_{y_0y_11}$ with probability $q_1$, and of $\cP_{y_0y_10}$ with probability $q_0$. Note that the preparation $\cP_{y_0y_1C}$ is indeed an extension of $\cP_{y_0y_1}$ in the new theory, as $p((\lambda,0_C)|\cP_{y_0y_1C})+p((\lambda,1_C)|\cP_{y_0y_1C}) = p(\lambda|\cP_{y_0y_1})$. Finally, we update the prior on preparations by setting $p(\cP_{y_0y_1 1}) = q_1\, p(\cP_{y_0y_1})$ and $p(\cP_{y_0y_1 0}) = q_0 \,p(\cP_{y_0y_1})$,
so that 
\begin{align}
p(\cP_{y_0y_1}) = p(\cP_{y_0y_10})+p(\cP_{y_0y_1 1})\ .
\end{align}
One can check that with these definitions, whenever  $\sum_{y_0}p(\cP_{y_0y_1}|\lambda) \geq 2^{-\alpha/2}$ we have, using Bayes' rule twice,
\begin{align}
 &p\big(\cP_{y_0 y_1 1} | (\lambda,1_C)\big)\nonumber\\
 & = \frac{p((\lambda,1_C)|\cP_{y_0y_1 1})\, p(\cP_{y_0y_1 1})}{p((\lambda,1_C))}\nonumber\\
 &= \frac{(p(\lambda|\cP_{y_0y_1})/q_1)\cdot  (q_1 p(\cP_{y_0y_1}))}{p(\lambda)}\nonumber \\
 &= p(\cP_{y_0y_1} | \lambda)\label{eq:bayes2}
 \end{align}
and $0$ otherwise, where for the second equality we used $p(\lambda) = p((\lambda,1_C))$ for all those $\lambda$ such that $p((\lambda,1_C)|\cP_{y_0y_1 1})$ is not zero. 

\medskip

From this point on, our proof follows very closely the classical proof of Lemma~\ref{lem:classicalsplitting}. By definition, for every $y_1$ and every $\lambda$, we have that
\begin{align}
\sum_{y_0} p(\cP_{y_0 y_1 0}|(\lambda,0_C)) < 2^{-\alpha/2}\label{eq:c0}
\end{align}
It does not seem possible to similarly bound $\sum_{y_1} p(\cP_{y_0 y_11}|(\lambda,1_C))$, but it is not necessary either, as we do not have access to this quantity directly. Rather, let $\cM=\{\me_k\}_k$ be any $2d$-outcome measurement; as in~\eqref{eq:maxbound} we need to bound
\begin{align}
\sum_\lambda p(\lambda) \max_{y_0 } \left[\sum_k p(k|(\lambda,1_C),\me_k) \sum_{y_1} p(\cP_{y_0y_1 1}|(\lambda,1_C))\right]\nonumber
\label{eq:maxbound2}
\end{align}
Note that by definition, $\sum_{y_0} p(\cP_{y_0y_11}|(\lambda,1_C))$ is either $0$ or at least $2^{-\alpha/2}$, so that for all $y_0$, $y_1$ and $\lambda$, we have the trivial bound
\begin{align}
&p(\cP_{y_0y_11}| (\lambda,1_C)) \nonumber\\
&\leq  \max_{y_0y_1} p(\cP_{y_0y_11}|(\lambda,1_C))\, 2^{\alpha/2}  \sum_{y_0} p(\cP_{y_0y_11}|(\lambda,1_C))\nonumber\\
&= \max_{y_0y_1}  p(\cP_{y_0y_1}|\lambda)\, 2^{\alpha/2}  \sum_{y_0} p(\cP_{y_0y_11}|(\lambda,1_C))
\end{align}
where for the last equality we used~\eqref{eq:bayes2}. Summing this equation over all $y_1$ and combining it with~\eqref{eq:maxbound} lets us bound~\eqref{eq:maxbound2} by $2^{-\alpha} 2^{\alpha/2} \cdot 1 = 2^{-\alpha/2}$. 
 This bound together with~\eqref{eq:c0} proves the theorem.
\end{proof}

The fact that min-entropy splitting holds in any NC-HV model now follows as a corollary from Theorem~\ref{thm:mainsplitting} and the fact that the chain rule~\eqref{eq:chain} and monotonicity~\eqref{eq:monotone} of the min-entropy also hold
for NC-HV models. 

\begin{corollary}\label{cor:mainsplitting}
Let a NC-HV model $\mathfrak{M}$ be given, with corresponding set of hidden variables $\Lambda$. Let $Y=Y_0Y_1$ be two classical random variables each taking values in a finite set $\cY$, and $\{ \cP_{y_0y_1}\}_{(y_0,y_1)\in\cY^2}$ a corresponding fixed set of preparations on a register $E$. Then there exists an extended model $\mathfrak{M}'$ over the set of hidden variables $\Lambda'= \Lambda \times \{0_C, 1_C\}$, and a set of preparations $\cP_{y_0y_1c}$, for $c\in\{0,1\}$, extending the $\cP_{y_0y_1}$ such that
\begin{align}\label{eq:NCsplittingStatement}
\hmin(Y_{C}|EC)\geq 	\frac{\hmin(Y_0Y_1|E)}{2} -1\ .
\end{align}
\end{corollary}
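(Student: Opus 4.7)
The plan is to derive the corollary directly from Theorem~\ref{thm:mainsplitting} by a single application of the chain rule for the min-entropy. Theorem~\ref{thm:mainsplitting} already produces an extended NC-HV model with preparations $\cP_{y_0y_1c}$ extending the $\cP_{y_0y_1}$ and satisfying $\hmin(Y_C\,C|E) \geq \alpha/2$, where $\alpha := \hmin(Y_0Y_1|E)$. So the only work remaining is to move $C$ from inside the joint entropy onto the conditioning side, losing at most one bit in the process.

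For this I would apply the chain rule~\eqref{eq:chain} with the substitutions $Y \mapsto Y_C$ and $Z \mapsto C$. Since $C$ is a single bit, $\log|\mathcal{Z}| = 1$ and the chain rule gives
\begin{align*}
\hmin(Y_C|EC) \;\geq\; \hmin(Y_C\,C|E) - 1 \;\geq\; \frac{\alpha}{2} - 1\ ,
\end{align*}
which is exactly the claimed inequality~\eqref{eq:NCsplittingStatement}. Monotonicity~\eqref{eq:monotone} is not strictly needed for this particular corollary, but it is the natural companion property one would invoke if, for example, one wished to further drop the joint occurrence of $C$ without conditioning.

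The only subtlety is that we are using the chain rule in the NC-HV setting rather than in pure classical or quantum probability. The corollary statement already flags this, and the justification is the same as in the classical case: in any model where $\hmin$ is \emph{defined} through the guessing probability, as in~\eqref{eq:pguessdef}--\eqref{eq:pguesshmin}, one can always produce a guess for $(Y_C,C)$ given $E$ by first sampling $C$ uniformly in $\{0,1\}$ and then running the optimal guess for $Y_C$ given $(E,C)$; this yields $P_{\rm guess}(Y_C\,C|E) \geq P_{\rm guess}(Y_C|EC)/2$, which is the chain rule after taking $-\log$. Thus there is no real obstacle beyond Theorem~\ref{thm:mainsplitting} itself, and the corollary falls out essentially for free.
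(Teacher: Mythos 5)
Your proposal is correct and takes essentially the same route as the paper: the corollary is obtained from Theorem~\ref{thm:mainsplitting} by one application of the chain rule~\eqref{eq:chain} with $Z=C$, $\log|\mathcal{Z}|=1$, with the chain rule justified operationally through the guessing probability exactly as the paper does. Your remark that monotonicity~\eqref{eq:monotone} is not strictly needed for this step is also accurate (the paper cites it alongside the chain rule, but it only enters the subsequent robustness argument).
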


To see that this equality is robust is now again an immediate consequence of the chain rule~\eqref{eq:chain} and monotonicity property~\eqref{eq:monotone}, which tell us that when we obtain some additional classical information $A=a$ with $a \in \mathcal{A}$ we have
\begin{align}
	\hmin(Y_{C}|EAC)&\geq 
	\hmin(Y_{C}A|EC) - \log |\mathcal{A}|\\
	&\geq \hmin(Y_{C}|EC) - \log |\mathcal{A}|\\
	&\geq \hmin(Y_0Y_1|E)/2 - \log|\mathcal{A}| - 1\ .
\end{align}
That is, a secretly helpful NC-HV leaking a small number $m = \log |\mathcal{A}|$ bits of additional information does not decrease the min-entropy by more than $\log|\mathcal{A}|$ bits.


\section{Splitting is violated by quantum mechanics}\label{sec:encoding}

We are now ready to show that the splitting inequality~\eqref{eq:splitting} is violated by quantum mechanics in a very strong sense.
To this end, we first construct a particular quantum encoding of two dits into one qudit. 

\subsection{The encoding}

Consider the encoding $E: \{0,\ldots,d-1\}^{\times 2}\rightarrow \Complex^d$ given by
\begin{align}\label{eq:encodingApp}
E(y_0,y_1) = \enc := \rot\ket{\Psi}\ ,
\end{align}
where $\xd$ and $\zd$ are the generalized Pauli matrices given by their actions on an orthonormal basis $\{\ket{y_0}, y_0 \in \{0,\ldots,d-1\}\}$
\begin{align}
	\xd\ket{y_0} &= \ket{y_0 + 1 \mod d}\ ,\\
	\zd\ket{y_0} &= \omega^{y_0} \ket{y_0}\ ,
\end{align}
with $\omega = \exp(2 \pi i/d)$, and
\begin{align}
\ket{\Psi} := \frac{1}{\sqrt{2\left(1 + \frac{1}{\sqrt{d}}\right)}} \left(\ket{0} + F\ket{0}\right)\ ,
\end{align}
with $F$ denoting the Quantum Fourier transform operator over $\mathbb{Z}_d$. Note that $\xd = F \zd F^\dagger$. We also refer to the eigenbasis
of $\zd$ as the \emph{computational basis} and the eigenbasis of $\xd$ as the \emph{Fourier basis}. Below, it will be convenient to note that
$\zd$ acts as the cyclic shift operator in the eigenbasis of $\xd$, and vice versa. Throughout, we will assume that $d$ is prime.

Imagine a source that chooses $y_0,y_1 \in \cY := \{0,\ldots,d-1\}$ uniformly 
at random
and emits $\enc$, corresponding to the ccq-state
\begin{align}\label{eq:source}
&\rho_{Y_0Y_1E}\nonumber\\
&:= \frac{1}{d^2} \sum_{y_0,y_1} 
\underbrace{\proj{y_0}}_{Y_0} \otimes \underbrace{\proj{y_1}}_{Y_1} \otimes \underbrace{\senc}_{E}\ .
\end{align}
Throughout, we will consider the probability that we guess $Y_0Y_1$ or the individual entries $Y_0$ and $Y_1$
given the register $E$.
We begin by showing that for our specific encoding the probability of guessing \emph{both} entries $Y_0,Y_1$ is
small.
\begin{lemma}
For the ccq-state $\rho_{Y_0Y_1E}$ given by~\eqref{eq:source}
\begin{align}
P_{\rm guess}(Y_0Y_1|E) = \frac{1}{d}\ .
\end{align}
\end{lemma}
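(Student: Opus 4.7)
My approach is to write the guessing probability as the value of a semidefinite program and exhibit matching primal and dual feasible solutions, each attaining $1/d$; by weak duality this pins the optimal value, and hence $P_{\rm guess}(Y_0Y_1|E)$, exactly to $1/d$.

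First, I would expand the definition of $P_{\rm guess}(Y_0Y_1|E)$ using~\eqref{eq:quantumguess} and the uniformity of $P_{Y_0Y_1}$ to obtain
\begin{equation*}
P_{\rm guess}(Y_0Y_1|E) \;=\; \max \Bigl\{ \tfrac{1}{d^2}\sum_{y_0,y_1}\tr\!\bigl(M_{y_0y_1}\senc\bigr) \;:\; M_{y_0y_1}\geq 0,\ \sum_{y_0,y_1} M_{y_0y_1}=\id \Bigr\},
\end{equation*}
which is exactly the SDP from Step~1 in the main text. Standard Lagrangian duality gives as dual the problem of minimizing $\tr(Q)$ over Hermitian $Q$ satisfying $Q\geq \tfrac{1}{d^2}\senc$ for every $(y_0,y_1)$.

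Next I would propose explicit solutions. For the dual, take $\hat Q = \id/d^2$: since each $\ket{\Psi_{y_0y_1}} = X_d^{y_0}Z_d^{y_1}\ket{\Psi}$ is a unit vector (a quick check using $\ket{\Psi}=(\ket{0}+F\ket{0})/\sqrt{2(1+1/\sqrt d)}$ and $\langle 0|F|0\rangle = 1/\sqrt d$ gives $\|\ket{\Psi}\|=1$), we have $\senc \leq \id$, so $\hat Q$ is feasible with value $\tr(\hat Q)=1/d$. For the primal, take $\hat M_{y_0y_1}=\senc/d$. Positivity is immediate, and the objective evaluates to $\tfrac{1}{d^2}\sum_{y_0,y_1} \tr(\senc^2)/d = 1/d$. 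The one nontrivial point is the normalization $\sum_{y_0,y_1}\hat M_{y_0y_1}=\id$, for which I would invoke the fact that the generalized Pauli operators $\{X_d^{y_0}Z_d^{y_1}\}$ furnish an irreducible projective representation of $\mathbb{Z}_d\times\mathbb{Z}_d$. By Schur's lemma, the twirl
\begin{equation*}
\frac{1}{d^2}\sum_{y_0,y_1} X_d^{y_0}Z_d^{y_1}\,\proj{\Psi}\,(X_d^{y_0}Z_d^{y_1})^{\dagger}
\end{equation*}
must be a scalar multiple of $\id$, and taking the trace fixes the scalar to $\tr(\proj{\Psi})/d=1/d$. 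Summed without the $1/d^2$ prefactor this yields $d\cdot\id$, hence $\sum_{y_0,y_1}\hat M_{y_0y_1}=\id$ as required.

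Combining these two constructions, weak duality gives $P_{\rm guess}(Y_0Y_1|E) = v_{\rm primal}=v_{\rm dual}=1/d$, and the lemma follows. The step I expect to require the most care is the normalization of the primal POVM: one must be explicit that $\{X_d^{y_0}Z_d^{y_1}\}$ acts irreducibly (equivalently, that their commutation relations $X_d Z_d = \omega^{-1} Z_d X_d$ with $\omega=e^{2\pi i/d}$ forbid a nontrivial common invariant subspace) so that Schur's lemma applies; everything else is a direct calculation using the definitions and unitarity of the Heisenberg--Weyl operators.
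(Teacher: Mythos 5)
Your proposal is correct and follows essentially the same route as the paper: the identical primal/dual SDP formulation, the dual solution $\hat Q = \id/d^2$, and the primal POVM $M_{y_0y_1} = \senc/d$ whose normalization is verified via Schur's lemma for the irreducibly-acting Heisenberg--Weyl operators, with weak duality closing the argument. Your extra care in justifying irreducibility and the normalization of $\ket{\Psi}$ is sound, just slightly more explicit than the paper's presentation.
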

\begin{proof}
Computing the probability of guessing both bits is equivalent to solving the semidefinite program (SDP)
\begin{sdp}{maximize}{$\frac{1}{d^2} \sum_{y_0,y_1} \tr\left(M_{y_0y_1}\senc\right)$}
& $M_{y_0y_1} \geq 0 \mbox{ for all } y_0,y_1$\ ,\\
& $\sum_{y_0,y_1} M_{y_0,y_1} = \id$\ .
\end{sdp}
The dual SDP is easily found to be
\begin{sdp}{minimize}{$\Tr(Q)$}
& $Q \geq \frac{1}{d^2} \senc \mbox{ for all } y_0,y_1$\ .
\end{sdp}
Let $v_{\rm primal}$ and $v_{\rm dual}$ be the optimal values of the primal and dual respectively. Note that by 
weak duality we have $v_{\rm dual} \geq v_{\rm primal}$. Since $\senc$ is a pure state, $\hat{Q} = \id/d^2$ is a feasible
dual solution with value $\tr(\hat{Q}) = 1/d$. 

We now show that $\hat{Q}$ is in fact optimal, by constructing a solution to the primal that achieves the same value.
Let $M_{y_0y_1} = \senc/d$. Clearly, $M_{y_0y_1} \geq 0$ for all $y_0$ and $y_1$, and by Schur's lemma we have
\begin{align}
\sum_{y_0,y_1} M_{y_0y_1} &= \frac{1}{d} \sum_{y_0,y_1} \rot \proj{\Psi} (\rot)^\dagger\\
&= \id\ .
\end{align}
Hence, our choice of operators is a feasible primal solution with primal value $1/d$ which concludes our claim.
\end{proof}

We now show that the probability of retrieving any of the individual entries $Y_0$ and $Y_1$ is nevertheless
quite large. To this end, let us first establish the following simple lemma.

\begin{lemma}\label{lem:indProb}
For the encoding defined in~\eqref{eq:encodingApp} we have for all $y_0, y_1 \in \{0,\ldots,d-1\}$
\begin{align}
|\inp{y_0}{\Psi_{y_0y_1}}|^2 &= 
\frac{1}{2} + \frac{1}{2 \sqrt{d}}\ , \\
|\bra{y_1}F^\dagger \ket{\Psi_{y_0y_1}}|^2 &= 
\frac{1}{2} + \frac{1}{2 \sqrt{d}}\ .
\end{align}
\end{lemma}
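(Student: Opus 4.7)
The plan is to verify both identities by direct computation, exploiting the actions of $X_d$, $Z_d$, and $F$ on the relevant basis states and on $\ket{\Psi}$. The two expressions turn out to be equal not by coincidence but because $F$ intertwines the two Pauli operators up to a phase.

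First I would handle $|\inp{y_0}{\Psi_{y_0y_1}}|^2$. Since $X_d^{y_0}\ket{0}=\ket{y_0}$, we get $\bra{y_0}X_d^{y_0}=\bra{0}$, so $\inp{y_0}{\Psi_{y_0y_1}}=\bra{0}Z_d^{y_1}\ket{\Psi}$. Now $Z_d^{y_1}\ket{0}=\ket{0}$, and using $F\ket{0}=\frac{1}{\sqrt{d}}\sum_k\ket{k}$ one verifies $Z_d^{y_1}F\ket{0}=F\ket{y_1}$. Plugging this into the definition of $\ket{\Psi}$ and taking the overlap with $\bra{0}$ yields
\begin{align*}
\bra{0}Z_d^{y_1}\ket{\Psi} = \frac{1+1/\sqrt{d}}{\sqrt{2(1+1/\sqrt{d})}} = \sqrt{\frac{1+1/\sqrt{d}}{2}},
\end{align*}
whose square is exactly $\tfrac{1}{2}+\tfrac{1}{2\sqrt{d}}$.

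For the second identity, I would reduce it to essentially the same calculation by pushing $F^\dagger$ through $X_d^{y_0}$. The key conjugation identity is $F^\dagger X_d F = Z_d^{-1}$ (verified by a one-line calculation using $Z_d\ket{k}=\omega^k\ket{k}$), which gives $F^\dagger X_d^{y_0} = Z_d^{-y_0}F^\dagger$. Thus
\begin{align*}
\bra{y_1}F^\dagger X_d^{y_0}Z_d^{y_1}\ket{\Psi} = \bra{y_1}Z_d^{-y_0}F^\dagger Z_d^{y_1}\ket{\Psi} = \omega^{-y_0 y_1}\bra{y_1}F^\dagger Z_d^{y_1}\ket{\Psi}.
\end{align*}
Using $F^\dagger\ket{0}=\frac{1}{\sqrt{d}}\sum_k\ket{k}$ and the same identity $Z_d^{y_1}F\ket{0}=F\ket{y_1}$ as before, the remaining factor equals $\sqrt{(1+1/\sqrt{d})/2}$. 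The phase $\omega^{-y_0y_1}$ disappears in the modulus squared, giving exactly the stated value.

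There is no real obstacle here beyond bookkeeping: the only mildly non-trivial step is getting the conjugation relation $F^\dagger X_d F = Z_d^{-1}$ correct (the sign in the exponent differs depending on the Fourier convention), but once this is in hand both computations collapse to the same one-line evaluation of an inner product between $\bra{0}$ (or its Fourier image) and the superposition $\ket{0}+F\ket{0}$.
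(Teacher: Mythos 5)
Your proof is correct and follows essentially the same route as the paper: a direct computation using $X_d^{y_0}\ket{0}=\ket{y_0}$, $Z_d^{y_1}\ket{0}=\ket{0}$, $Z_d^{y_1}F\ket{0}=F\ket{y_1}$, and the Fourier conjugation relation to peel off unimodular phases, reducing both overlaps to the same inner product of value $\sqrt{(1+1/\sqrt{d})/2}$. The only cosmetic difference is that the paper disposes of the second case by also invoking $F\ket{\Psi}=\ket{\Psi}$, whereas you evaluate the remaining overlap directly; your caution about the sign convention in $F^\dagger X_d F$ is well placed but immaterial since the phase cancels in the modulus.
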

\begin{proof}
First of all, note that for all $y_0$ and $y_1$
\begin{align}
\langle y_0\enc &= \bra{0}(\xd^{y_0})^\dagger \enc\\
&=\bra{0}\zd^{y_1}\ket{\Psi} = \inp{0}{\Psi}\ ,
\end{align}
where we have used the fact that $\zd\ket{0} = \ket{0}$. 
Similarly, we have
\begin{align}\
\bra{y_1}F^\dagger \enc &= \omega^{- y_1 a} \bra{y_1} F^\dagger \zd^{y_1}\ket{\Psi} \\
&=
\omega^{- y_1 a} 
\bra{0}F^\dagger (\zd^{y_1})^\dagger \zd^{y_1}\ket{\Psi}\\
&=
\omega^{- y_1 a} 
\bra{0}F^\dagger\ket{\Psi}\\
&=
\omega^{- y_1 a} 
\inp{0}{\Psi}\ ,
\end{align}
with $\omega = \exp(2 \pi i/d)$, where the first equality follows from the fact that $F\ket{y_1}$ is an 
eigenvector of $\xd$, and the last equality by noting that $F\ket{\Psi} = \ket{\Psi}$. 
It thus remains to compute 
\begin{align}
\inp{0}{\Psi} &= \frac{1}{\sqrt{2 \left(1 + \frac{1}{\sqrt{d}}\right)}} \left(\inp{0}{0} + \bra{0}F\ket{0}\right)\\
&= \frac{1}{\sqrt{2}} \sqrt{1 + \frac{1}{\sqrt{d}}}\ ,
\end{align}
from which our claim follows.
\end{proof}

It is now straightforward to compute the maximum probabilities that we retrieve $Y_0$ and $Y_1$ correctly.
\begin{lemma}
For the cq-states $\rho_{Y_0E}$ and $\rho_{Y_1E}$ given by the reduced states of~\eqref{eq:source}
we have
\begin{align}
P_{\rm guess}(Y_0|E) = P_{\rm guess}(Y_1|E) = \frac{1}{2} + \frac{1}{2 \sqrt{d}}\ .
\end{align}
\end{lemma}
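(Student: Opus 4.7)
The plan is to establish matching upper and lower bounds equal to $1/2 + 1/(2\sqrt{d})$ on the guessing probabilities, via SDP duality analogous to the proof in Step 1. By the $X_d \leftrightarrow Z_d$ (equivalently, $F \leftrightarrow F^\dagger$) symmetry of $\ket{\Psi}$, it suffices to handle $P_{\rm guess}(Y_0|E)$; the bound for $P_{\rm guess}(Y_1|E)$ follows by swapping roles.

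For the lower bound, I would exhibit a feasible primal: take the POVM $N_{y_0} = \proj{y_0}$ consisting of computational basis projectors. By Lemma~\ref{lem:indProb}, $|\langle y_0 | \Psi_{y_0 y_1}\rangle|^2 = 1/2 + 1/(2\sqrt{d})$ for every pair $(y_0,y_1)$, so averaging over the uniform prior $P_{Y_0 Y_1}$ yields exactly $1/2 + 1/(2\sqrt{d})$.

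For the upper bound, I would introduce the marginals $\sigma_{y_0} := \frac{1}{d}\sum_{y_1}\proj{\Psi_{y_0y_1}}$ and write the guessing SDP as $\max_{\{N_{y_0}\}}\frac{1}{d}\sum_{y_0}\tr(N_{y_0}\sigma_{y_0})$, with dual
\begin{sdp}{minimize}{$\Tr(Q)$}
& $Q \geq \frac{1}{d}\sigma_{y_0}\mbox{ for all }y_0$\ .
\end{sdp}
I would propose the dual solution $\hat{Q} = \bigl(\tfrac{1}{2}+\tfrac{1}{2\sqrt{d}}\bigr)\id/d$, which has the target value. Feasibility reduces to showing $\|\sigma_{y_0}\|_\infty \leq \tfrac{1}{2}+\tfrac{1}{2\sqrt{d}}$ for every $y_0$. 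Using the standard Schur-averaging identity $\frac{1}{d}\sum_{y_1} Z_d^{y_1} A Z_d^{-y_1} = \mathrm{diag}_Z(A)$ (projection onto the $Z_d$-diagonal), I obtain $\sigma_{y_0} = X_d^{y_0}\,\mathrm{diag}_Z(\proj{\Psi})\,X_d^{-y_0}$, whose spectrum is $\{|\langle i|\Psi\rangle|^2\}_{i}$. A direct computation of $\langle i|\Psi\rangle = \frac{1}{\sqrt{2(1+1/\sqrt{d})}}(\delta_{i,0}+1/\sqrt{d})$ shows that the largest eigenvalue is $|\langle 0|\Psi\rangle|^2 = \tfrac{1}{2}+\tfrac{1}{2\sqrt{d}}$, attained only at $i=0$. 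This gives the required operator inequality and concludes the upper bound by weak duality.

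The main obstacle is the eigenvalue computation for $\sigma_{y_0}$: one must recognize that averaging over $y_1$ washes out all off-diagonal entries in the $Z_d$-basis and then identify the largest diagonal entry of $\proj{\Psi}$. This is straightforward given Lemma~\ref{lem:indProb}, but requires being careful that the distinguished eigenvalue is indeed the maximum (verified since $\tfrac{1}{2}+\tfrac{1}{2\sqrt{d}} \geq \tfrac{1}{2d(1+1/\sqrt{d})}$, the common value of the other $d-1$ eigenvalues). The argument for $Y_1$ is obtained verbatim after conjugating by $F$, since $F\ket{\Psi} = \ket{\Psi}$ and $F X_d F^\dagger = Z_d$, so the Fourier basis $\{F\ket{y_1}\}$ plays the role of the computational basis.
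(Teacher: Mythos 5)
Your proof is correct, and it reaches the same final computation --- the largest eigenvalue of the dephased state $\sigma_{y_0}$ --- by a somewhat different route than the paper. The paper proves the upper bound by \emph{primal symmetrization}: it argues that since $[\sigma_{y_0}, Z_d^a]=0$, one may average any POVM over $Z_d$-conjugations and so restrict to measurements diagonal in the computational basis, and then bounds the guessing probability by the largest diagonal entry $\max_j \langle j|\sigma_{y_0}|j\rangle$. You instead exhibit an explicit dual-feasible certificate $\hat{Q} = \bigl(\tfrac{1}{2}+\tfrac{1}{2\sqrt{d}}\bigr)\tfrac{\id}{d}$ and verify feasibility via $\|\sigma_{y_0}\|_\infty \leq \tfrac{1}{2}+\tfrac{1}{2\sqrt{d}}$, which is the same spectral computation but packaged as weak duality. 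Your approach is actually closer in spirit to the paper's Step 1 argument for $P_{\rm guess}(Y|E)=1/d$ (there too an explicit dual solution is exhibited), and is arguably cleaner: the dual certificate makes the upper bound self-contained without the intermediate ``pick the largest term'' bookkeeping. The lower-bound half (computational-basis POVM, invoking Lemma~\ref{lem:indProb}) is identical in both. One minor thing worth stating explicitly in your write-up, since it underlies the identity $\sigma_{y_0}=X_d^{y_0}\,\mathrm{diag}_Z(\proj{\Psi})\,X_d^{-y_0}$: the $Z_d$-twirl kills all off-diagonals precisely because $d$ is prime so the eigenvalues $\omega^j$ of $Z_d$ are distinct; the paper assumes $d$ prime throughout, so this is consistent, but it is the point at which that assumption is used.
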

\begin{proof}
We first show our claim for $P_{\rm guess}(Y_0|E)$.
Consider the state corresponding to an encoding of $y_0$ given by
$\sigma_{y_0} := \frac{1}{d} \sum_{y_1} \senc$.
As before, we can express the winning probability as an SDP with primal
\begin{sdp}{maximize}{$\frac{1}{d} \sum_{y_0} \tr\left(M_{y_0}\sigma_{y_0}\right)$}
& $M_{y_0} \geq 0 \mbox{ for all } y_0$\ ,\\
& $\sum_{y_0} M_{y_0} = \id$\ .
\end{sdp}
We now show that without loss of generality, the optimal measurement has an extremely simple form.
First of all note that $[\sigma_{y_0},\zd^a] = 0$ for all $a$ and $y_0$ since 
\begin{align}
\sigma_{y_0} = \frac{1}{d} \sum_{y_1} \zd^{y_1} \proj{\Psi_{y_00}} (\zd^{y_1})^\dagger\ .
\end{align}
Hence, if $\{M_{y_0}\}_{y_0}$ is an optimal solution then so is the measurement given
by $\hat{M}_{y_0} = \frac{1}{d} \sum_{a} \zd^a M_{y_0} (\zd^a)^\dagger$.
Thus without loss of generality we may assume that the optimal measurement operators are diagonal in the 
computational basis.
Now consider the largest term corresponding to $\hat{M}_{{\rm max}}$ and $\sigma_{\rm max}$ such that
\begin{align}
\tr\left(\hat{M}_{\rm max} \sigma_{\rm max}\right) \geq \tr\left(\hat{M}_{y_0} \sigma_{y_0}\right)
\end{align}
for all $y_0$. Since all measurement operators are Hermitian, we can expand
$\hat{M}_{\rm max} = \sum_{j}\lambda_j \proj{j}$
in its eigenbasis. We may now in turn consider the element $\proj{j}$ which has the largest overlap with
$\sigma_{\rm max}$.  That is, choose 
\begin{align}\label{eq:maxM}
	m = \argmax_j \bra{j}\sigma_{\rm max}\ket{j}\ ,
\end{align}
that is, $\bra{m}\sigma_{\rm max}\ket{m} \geq \bra{j}\sigma_{\rm max}\ket{j}$
for all $j$. 
Clearly, we have that
\begin{align}\label{eq:upperBoundVal}
P_{\rm guess}(Y_0|E) \leq \bra{m}\sigma_{\rm max}\ket{m}\ .
\end{align}
It remains to prove that this inequality is tight. Without loss of generality assume that $\sigma_{\rm max} = \sigma_0$, 
any other case will follow by a simple relabeling. Note that by Lemma~\ref{lem:indProb} we have
\begin{align}\label{eq:actualVal}
\bra{y_0}\sigma_0\ket{y_0} &\leq \bra{0}\sigma_0\ket{0} = \frac{1}{2} + \frac{1}{2\sqrt{d}}\ ,
\end{align}
for all $y_0$ and thus we choose $m = 0$ in~\eqref{eq:maxM}. Note that by construction we have
$\sigma_{y_0} = \xd^{y_0} \sigma_0 (\xd^{y_0})^\dagger$,
and hence
$\bra{y_0}\sigma_{y_0}\ket{y_0} = \bra{0}\sigma_0\ket{0}$.
Thus for the measurement in the computational basis given by
$M_{y_0} = \proj{y_0}$,
the inequality~\eqref{eq:upperBoundVal} is tight which together with~\eqref{eq:actualVal} gives our claim.
The case of retrieving $Y_1$ is exactly analogous, with the roles of $\xd$ and $\zd$ interchanged.
\end{proof}

\subsection{Min-entropy splitting}

We are ready to show that the min-entropy splitting inequality~\eqref{eq:splitting} is violated for the ccq-state given in~\eqref{eq:source}.

\begin{theorem}
For the ccq-state given in~\eqref{eq:source}, we have that for any ccqc state $\rho_{Y_0Y_1EC}$ 
with $\dim(C) = 2$ satisfying
$\tr_{C}(\rho_{Y_0 Y_1 EC}) = \rho_{Y_0 Y_1 E}$,
\begin{align}
P_{\rm guess}(Y_c|E C=c) \geq \frac{1}{2} + \frac{1}{2 \sqrt{d}}\ 
\end{align}
for all $c \in \01$.
\end{theorem}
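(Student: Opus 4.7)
The plan is to exploit the consistency condition, together with the fact that each encoded state $\senc$ is pure, to show that conditioning on $C=c$ can only reshuffle the prior distribution on $(y_0,y_1)$ without altering the state held in register $E$. Once this decoupling is established, the pointwise-uniform overlap identity of Lemma~\ref{lem:indProb} immediately yields the claimed bound for either choice of $c$.

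First I would write the joint ccqc state in the form
\begin{align*}
\rho_{Y_0Y_1EC} = \frac{1}{d^2}\sum_{y_0,y_1,c} \proj{y_0}\otimes\proj{y_1}\otimes \rho^{E}_{y_0y_1c}\otimes\proj{c},
\end{align*}
where the $\rho^E_{y_0y_1c}$ are subnormalized positive operators on $E$. The consistency assumption $\tr_C(\rho_{Y_0Y_1EC}) = \rho_{Y_0Y_1E}$ becomes $\sum_c \rho^E_{y_0y_1c} = \senc$ for every $(y_0,y_1)$. Since $\senc$ is a rank-one projector, any decomposition of it into positive operators must consist of scalar multiples of $\senc$ itself, so $\rho^E_{y_0y_1c} = q(c\mid y_0y_1)\,\senc$ for some conditional probability distribution $q(\cdot\mid y_0y_1)$.

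With this in hand, conditioning on the event $C=c$ leaves register $E$ in exactly $\enc$ whenever $(Y_0,Y_1)=(y_0,y_1)$, but with an updated prior $\tilde{P}^c(y_0y_1)\propto q(c\mid y_0y_1)$. For $c=0$ I would measure $E$ in the eigenbasis of $\zd$ and output the observed label; by the first identity of Lemma~\ref{lem:indProb} this produces the correct $y_0$ with probability $1/2 + 1/(2\sqrt{d})$ on every individual encoded state $\enc$, independently of $y_1$. Averaging over the new prior therefore yields $P_{\rm guess}(Y_0\mid EC=0)\geq 1/2 + 1/(2\sqrt{d})$ regardless of what $\tilde{P}^0$ looks like, and the case $c=1$ is completely symmetric using the Fourier-basis measurement together with the second identity of the lemma.

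The crux of the argument is the factorization $\rho^E_{y_0y_1c} = q(c\mid y_0y_1)\,\senc$, which is precisely where both the purity of the encoding and the consistency condition enter. Without the consistency condition one could simply take $C$ to be the outcome of a disturbing measurement on $E$ (as in the $d=2$ example in the main text), and without purity one could allow $C$ to be entangled with $E$ in a way that measurements on $C$ steer $E$ into post-selected states with arbitrarily different overlaps in the two Pauli bases. Once the factorization is in hand, however, the remainder of the proof reduces to a single application of Lemma~\ref{lem:indProb}.
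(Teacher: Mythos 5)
Your proposal is correct and takes essentially the same route as the paper's proof: both use the consistency condition together with the purity of $\senc$ to conclude that conditioning on $C=c$ only re-weights the prior on $(y_0,y_1)$ while leaving the encoding in register $E$ untouched, and then invoke the uniform overlap identity of Lemma~\ref{lem:indProb} with the computational ($Z_d$) or Fourier ($X_d$) basis measurement. The only cosmetic difference is the order in which classicality of $C$ and purity are applied (the paper first deduces the product form $\senc \otimes \sigma^C_{y_0y_1}$, you first block-decompose along $C$ and then use that a rank-one projector only admits decompositions into its own scalar multiples), which does not change the substance of the argument.
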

\begin{proof}
Note that we may express
\begin{align}
\rho_{Y_0Y_1 E C} = \frac{1}{d^2} \sum_{y_0 y_1} \proj{y_0} \otimes \proj{y_1} \otimes \rho_{y_0y_1c}^{EC}\ .
\end{align}
We now first note that by the reduced trace condition 
and the fact that $Y_0$ and $Y_1$ are classical
we must have that $tr_{C}(\rho_{y_0y_1c}^{EC}) = \senc$. Since $\senc$ is a pure state, this implies
that $\rho_{y_0y_1c}^{EC} = \senc \otimes \sigma_{y_0 y_1}^C$. Since $C$ is classical
which we can express $\sigma_{y_0y_1}^C$ without loss of generality in the computational basis as
\begin{align}
\sigma_{y_0y_1}^C = q_{y_0y_1} \proj{0} + (1 - q_{y_0y_1}) \proj{1}\ ,
\end{align}
for some arbitrary distribution $\{q_{y_0y_1},1-q_{y_0y_1}\}$. 

Let us now consider how well we can compute $P_{\rm guess}(Y_0|E C = 0)$; the case of $C = 1$ is analogous.
First of all, note that the state obtained from $\rho_{Y_0Y_1EC}$ 
after we measured $C$ in the computational basis and obtained outcome $C=0$, followed by tracing out $C$ is given by
\begin{align}
	&\rho_{Y_0Y_1E}\nonumber\\
	&= \frac{1}{\hat{q}_0} \sum_{y_0,y_1} \tilde{q}_{y_0} \tilde{q}_{y_1|y_0} \proj{y_0}\otimes \proj{y_1} \otimes \senc\ ,
\end{align}
where $\hat{q}_0 = \sum_{y_0y_1} \tilde{q}_{y_0y_1}$ and $\tilde{q}_{y_0y_1} = (1/d^2) q_{y_0} q_{y_1|y_0}$.
The states we wish to distinguish are thus given by
\begin{align}
	\sigma_{y_0|c=0} := \frac{1}{\sum_{y_1} \tilde{q}_{y_1|y_0}} \sum_{y_1} \tilde{q}_{y_1|y_0} \rho_{y_0y_1}\ ,
\end{align}
Note that from Lemma~\ref{lem:indProb} we have that for all $y_0$
\begin{align}
\bra{y_0}\sigma_{y_0|c=0}\ket{y_0} = \frac{1}{2} + \frac{1}{2 \sqrt{d}}\ . 
\end{align}
Hence, for the measurement in the computational basis we succeed with probability at least $1/2 + 1/(2 \sqrt{d})$, independent of the
distributions $\{q_{y_0y_1},1-q_{y_0y_1}\}$.
Again, by exchanging the roles of $\xd$ and $\zd$ the same probability can be achieved using a measurement
in the Fourier basis, which proves the theorem.
\end{proof}

In terms of min-entropy, we thus have that $\hmin(Y_0 Y_1|E) = \log d$ but for all $C$ we have $\hmin(Y_C|EC) \approx 1$!
This effect is still observed for the $\eps$-smooth min-entropy for small $\eps$, since 
\begin{align}
\hmineps(Y_0 Y_1|E) \geq \hmin(Y_0 Y_1|E) = \log d\ ,
\end{align}
and
\begin{align}
- \log\left(P_{\rm guess}(Y_C|EC) - \eps\right) \geq \hmineps(Y_C|EC)\ .
\end{align}


\begin{thebibliography}{21}
\expandafter\ifx\csname natexlab\endcsname\relax\def\natexlab#1{#1}\fi
\expandafter\ifx\csname bibnamefont\endcsname\relax
  \def\bibnamefont#1{#1}\fi
\expandafter\ifx\csname bibfnamefont\endcsname\relax
  \def\bibfnamefont#1{#1}\fi
\expandafter\ifx\csname citenamefont\endcsname\relax
  \def\citenamefont#1{#1}\fi
\expandafter\ifx\csname url\endcsname\relax
  \def\url#1{\texttt{#1}}\fi
\expandafter\ifx\csname urlprefix\endcsname\relax\def\urlprefix{URL }\fi
\providecommand{\bibinfo}[2]{#2}
\providecommand{\eprint}[2][]{\url{#2}}

\bibitem[{\citenamefont{Spekkens}(2005)}]{Spekkens2005}
\bibinfo{author}{\bibfnamefont{R.}~\bibnamefont{Spekkens}},
  \bibinfo{journal}{Physical Review A} \textbf{\bibinfo{volume}{71}},
  \bibinfo{pages}{052108} (\bibinfo{year}{2005}).

\bibitem[{\citenamefont{K\"onig et~al.}(2009)\citenamefont{K\"onig, Renner, and
  Schaffner}}]{krs:entropy}
\bibinfo{author}{\bibfnamefont{R.}~\bibnamefont{K\"onig}},
  \bibinfo{author}{\bibfnamefont{R.}~\bibnamefont{Renner}}, \bibnamefont{and}
  \bibinfo{author}{\bibfnamefont{C.}~\bibnamefont{Schaffner}},
  \bibinfo{journal}{IEEE Trans. Info.} \textbf{\bibinfo{volume}{55}}
  (\bibinfo{year}{2009}).

\bibitem[{\citenamefont{Wullschleger}(2007)}]{juerg:splitting}
\bibinfo{author}{\bibfnamefont{J.}~\bibnamefont{Wullschleger}}, in
  \emph{\bibinfo{booktitle}{Advances in Cryptology --- {EUROCRYPT}~'07}}
  (\bibinfo{publisher}{Springer-Verlag}, \bibinfo{year}{2007}), Lecture Notes
  in Computer Science.

\bibitem[{\citenamefont{Damg{\aa}rd et~al.}(2007)\citenamefont{Damg{\aa}rd,
  Fehr, Renner, Salvail, and Schaffner}}]{serge:new}
\bibinfo{author}{\bibfnamefont{I.~B.} \bibnamefont{Damg{\aa}rd}},
  \bibinfo{author}{\bibfnamefont{S.}~\bibnamefont{Fehr}},
  \bibinfo{author}{\bibfnamefont{R.}~\bibnamefont{Renner}},
  \bibinfo{author}{\bibfnamefont{L.}~\bibnamefont{Salvail}}, \bibnamefont{and}
  \bibinfo{author}{\bibfnamefont{C.}~\bibnamefont{Schaffner}}, in
  \emph{\bibinfo{booktitle}{Advances in Cryptology---CRYPTO~'07}}
  (\bibinfo{publisher}{Springer-Verlag}, \bibinfo{year}{2007}), vol.
  \bibinfo{volume}{4622} of \emph{\bibinfo{series}{Lecture Notes in Computer
  Science}}, pp. \bibinfo{pages}{360--378}.

\bibitem[{\citenamefont{Nayak}(1999)}]{nayak:rac}
\bibinfo{author}{\bibfnamefont{A.}~\bibnamefont{Nayak}}, in
  \emph{\bibinfo{booktitle}{Proceedings of 40th IEEE FOCS}}
  (\bibinfo{year}{1999}), pp. \bibinfo{pages}{369--376}.

\bibitem[{\citenamefont{Ambainis et~al.}(1999)\citenamefont{Ambainis, Nayak,
  {Ta-Shma}, and Vazirani}}]{nayak:original}
\bibinfo{author}{\bibfnamefont{A.}~\bibnamefont{Ambainis}},
  \bibinfo{author}{\bibfnamefont{A.}~\bibnamefont{Nayak}},
  \bibinfo{author}{\bibfnamefont{A.}~\bibnamefont{{Ta-Shma}}},
  \bibnamefont{and} \bibinfo{author}{\bibfnamefont{U.}~\bibnamefont{Vazirani}},
  in \emph{\bibinfo{booktitle}{Proceedings of 31st ACM STOC}}
  (\bibinfo{year}{1999}), pp. \bibinfo{pages}{376--383}.

\bibitem[{\citenamefont{Klyachko et~al.}(2008)\citenamefont{Klyachko, Can,
  Binicio\u{g}lu, and Shumsovsky}}]{klyachko:nc}
\bibinfo{author}{\bibfnamefont{A.~A.} \bibnamefont{Klyachko}},
  \bibinfo{author}{\bibfnamefont{M.~A.} \bibnamefont{Can}},
  \bibinfo{author}{\bibfnamefont{S.}~\bibnamefont{Binicio\u{g}lu}},
  \bibnamefont{and} \bibinfo{author}{\bibfnamefont{A.~S.}
  \bibnamefont{Shumsovsky}}, \bibinfo{journal}{Physical Review Letters}
  \textbf{\bibinfo{volume}{101}}, \bibinfo{pages}{020403}
  (\bibinfo{year}{2008}).

\bibitem[{\citenamefont{Cabello et~al.}(2010)\citenamefont{Cabello, Severini,
  and Winter}}]{andreas:nc}
\bibinfo{author}{\bibfnamefont{A.}~\bibnamefont{Cabello}},
  \bibinfo{author}{\bibfnamefont{S.}~\bibnamefont{Severini}}, \bibnamefont{and}
  \bibinfo{author}{\bibfnamefont{A.}~\bibnamefont{Winter}}
  (\bibinfo{year}{2010}), \bibinfo{note}{arXiv:1010.2163}.

\bibitem[{\citenamefont{Clauser et~al.}(1969)\citenamefont{Clauser, Horne,
  Shimony, and Holt}}]{chsh}
\bibinfo{author}{\bibfnamefont{J.}~\bibnamefont{Clauser}},
  \bibinfo{author}{\bibfnamefont{M.}~\bibnamefont{Horne}},
  \bibinfo{author}{\bibfnamefont{A.}~\bibnamefont{Shimony}}, \bibnamefont{and}
  \bibinfo{author}{\bibfnamefont{R.}~\bibnamefont{Holt}},
  \bibinfo{journal}{Phys. Rev. Lett.} \textbf{\bibinfo{volume}{23}},
  \bibinfo{pages}{880} (\bibinfo{year}{1969}).

\bibitem[{\citenamefont{Spekkens et~al.}(2009)\citenamefont{Spekkens, Buzacott,
  Keehn, Toner, and Pryde}}]{Spekkens2009}
\bibinfo{author}{\bibfnamefont{R.}~\bibnamefont{Spekkens}},
  \bibinfo{author}{\bibfnamefont{D.}~\bibnamefont{Buzacott}},
  \bibinfo{author}{\bibfnamefont{A.}~\bibnamefont{Keehn}},
  \bibinfo{author}{\bibfnamefont{B.}~\bibnamefont{Toner}}, \bibnamefont{and}
  \bibinfo{author}{\bibfnamefont{G.}~\bibnamefont{Pryde}},
  \bibinfo{journal}{Physical Review Letters} \textbf{\bibinfo{volume}{102}},
  \bibinfo{pages}{010401} (\bibinfo{year}{2009}).

\bibitem[{\citenamefont{Schaffner}(2007)}]{chris:talk}
\bibinfo{author}{\bibfnamefont{C.}~\bibnamefont{Schaffner}}
  (\bibinfo{year}{2007}), \bibinfo{note}{{P}ersonal communication}.

\bibitem[{\citenamefont{Damg{\aa}rd et~al.}(2005)\citenamefont{Damg{\aa}rd,
  Fehr, Salvail, and Schaffner}}]{serge:bounded}
\bibinfo{author}{\bibfnamefont{I.~B.} \bibnamefont{Damg{\aa}rd}},
  \bibinfo{author}{\bibfnamefont{S.}~\bibnamefont{Fehr}},
  \bibinfo{author}{\bibfnamefont{L.}~\bibnamefont{Salvail}}, \bibnamefont{and}
  \bibinfo{author}{\bibfnamefont{C.}~\bibnamefont{Schaffner}}, in
  \emph{\bibinfo{booktitle}{Proceedings of 46th IEEE FOCS}}
  (\bibinfo{year}{2005}), pp. \bibinfo{pages}{449--458}.

\bibitem[{\citenamefont{K{\"o}nig and Renner}(2007)}]{kr:sampling}
\bibinfo{author}{\bibfnamefont{R.}~\bibnamefont{K{\"o}nig}} \bibnamefont{and}
  \bibinfo{author}{\bibfnamefont{R.}~\bibnamefont{Renner}}
  (\bibinfo{year}{2007}), \bibinfo{note}{arXiv:0712.4291}.

\bibitem[{\citenamefont{K{\"o}nig et~al.}(2009)\citenamefont{K{\"o}nig, Wehner,
  and Wullschleger}}]{KoeWehWul09}
\bibinfo{author}{\bibfnamefont{R.}~\bibnamefont{K{\"o}nig}},
  \bibinfo{author}{\bibfnamefont{S.}~\bibnamefont{Wehner}}, \bibnamefont{and}
  \bibinfo{author}{\bibfnamefont{J.}~\bibnamefont{Wullschleger}}
  (\bibinfo{year}{2009}), \bibinfo{note}{arXiv:0906.1030}.

\bibitem[{\citenamefont{Schr{\"o}dinger}(1935)}]{schroedinger:eprGerman}
\bibinfo{author}{\bibfnamefont{E.}~\bibnamefont{Schr{\"o}dinger}},
  \bibinfo{journal}{Naturwissenschaften} \textbf{\bibinfo{volume}{23}},
  \bibinfo{pages}{807,823,840} (\bibinfo{year}{1935}).

\bibitem[{\citenamefont{Oppenheim and Wehner}(2010{\natexlab{a}})}]{js:urvsnl}
\bibinfo{author}{\bibfnamefont{J.}~\bibnamefont{Oppenheim}} \bibnamefont{and}
  \bibinfo{author}{\bibfnamefont{S.}~\bibnamefont{Wehner}},
  \bibinfo{journal}{Science} \textbf{\bibinfo{volume}{330}},
  \bibinfo{pages}{1072} (\bibinfo{year}{2010}{\natexlab{a}}).

\bibitem[{\citenamefont{Oppenheim and Wehner}(2010{\natexlab{b}})}]{js:inprep}
\bibinfo{author}{\bibfnamefont{J.}~\bibnamefont{Oppenheim}} \bibnamefont{and}
  \bibinfo{author}{\bibfnamefont{S.}~\bibnamefont{Wehner}}
  (\bibinfo{year}{2010}{\natexlab{b}}), \bibinfo{note}{in preparation}.

\bibitem[{\citenamefont{Short and Wehner}(2010)}]{ts:entropy}
\bibinfo{author}{\bibfnamefont{A.~J.} \bibnamefont{Short}} \bibnamefont{and}
  \bibinfo{author}{\bibfnamefont{S.}~\bibnamefont{Wehner}},
  \bibinfo{journal}{New Journal of Physics} \textbf{\bibinfo{volume}{12}},
  \bibinfo{pages}{033023} (\bibinfo{year}{2010}).

\bibitem[{\citenamefont{Renner}(2005)}]{renato:diss}
\bibinfo{author}{\bibfnamefont{R.}~\bibnamefont{Renner}}, Ph.D. thesis,
  \bibinfo{school}{ETH Zurich} (\bibinfo{year}{2005}),
  \bibinfo{note}{quant-ph/0512258}.

\bibitem[{\citenamefont{Bowie}(2003)}]{philCircle}
\bibinfo{author}{\bibfnamefont{A.}~\bibnamefont{Bowie}},
  \emph{\bibinfo{title}{Introduction to German philosophy: from Kant to
  Habermas}} (\bibinfo{year}{2003}).

\bibitem[{\citenamefont{Odendaal and Plastino}(2010)}]{odendaal}
\bibinfo{author}{\bibfnamefont{R.~Q.} \bibnamefont{Odendaal}} \bibnamefont{and}
  \bibinfo{author}{\bibfnamefont{A.~R.} \bibnamefont{Plastino}},
  \bibinfo{journal}{Eur. J. of Physics} \textbf{\bibinfo{volume}{31}},
  \bibinfo{pages}{193} (\bibinfo{year}{2010}).

\bibitem[]{endnote23}{Intuitively, a non-contextual model is one in which the observable statistics of a particular measurement do not depend on the \protect \emph  {context} in which the measurement is performed, and in particular on which other compatible measurements are possibly performed simultaneously. Here, and as is usual, we consider non-contextual models in which the measurements are composed of deterministic effects. We refer to the appendix for formal definitions.}
\bibitem[]{endnoteAdd}{Recall that there are infinitely many primes.}
\bibitem[]{endnote25}{In principle, $C$ could be arbitrary, but since we are only interested in the result of a 2-outcome measurement on $C$, we assume that $C$ is indeed already classical and use the subscript $C$ to denote that classical value.}
\bibitem[]{endnote26}{The whole here being a maximally entangled state, and the parts being the individual (locally completely mixed) subsystems. See also\protect \nobreakspace  {}\cite {odendaal} for an example.}
\bibitem[]{endnote27}{For example, one could consider weak forms of oblivious transfer where one only demands security against the receiver.}
\bibitem[]{endnote30}{A simple example in the classical setting --- storing each of the halves with probability half --- should convince the reader that this is necessary.}
\bibitem[]{endnote28}{We will write a $\protect \qopname  \relax m{max}$ instead of a $\protect \qopname  \relax m{sup}$ everywhere, and assume that the dimension of the system $E$ is finite.}
\bibitem[]{endnote29}{Some of these variables may be hidden, in the sense that the operational description of a particular preparation does not necessarily fully determine the distribution on variables that describe it, but we will not need to make that distinction.}

\end{thebibliography}
\end{document}